\newcommand{\E}{\mathbb{E}}
\renewcommand{\P}{\mathbb{P}}
\newcommand{\R}{\mathbb{R}}
\newcommand{\cA}{\mathcal{A}}
\newcommand{\cD}{\mathcal{D}}
\newcommand{\cF}{\mathcal{F}}
\newcommand{\cP}{\mathcal{P}}
\newcommand{\cR}{\mathcal{R}}
\newcommand{\cS}{\mathcal{S}}
\newcommand{\cT}{\mathcal{T}}
\newcommand{\ind}[1]{\mathbb{1}_{\left\{#1\right\}}}
\newcommand{\esssup}{\mathop{\mathrm{esssup}}}
\newtheorem{theorem}{Theorem}[section]
\newtheorem{lemma}[theorem]{Lemma}
\newtheorem{proposition}[theorem]{Proposition}
\DeclarePairedDelimiter{\abs}{\lvert}{\rvert}
\title{Pricing Bermudan options using Regression Trees / Random Forests}
\author{
Zineb El Filali Ech-chafiq \thanks{Univ. Grenoble Alpes, CNRS,
Grenoble INP, LJK,
38000 Grenoble, France.
Quantitative analyst at Natixis, Paris.
zinebelfilaliechchafique@gmail.com}
\and
Pierre Henry Labordère \thanks{Head of Quantitative Research cross asset,
Natixis, 47 quai d'austerlitz
75013, Paris.
CMAP, Ecole Polytechnique}
\and
Jérôme Lelong \thanks{Univ. Grenoble Alpes, CNRS,
Grenoble INP, LJK,
38000 Grenoble, France.
jerome.lelong@univ-grenoble-alpes.fr}
}
\begin{document}
\maketitle

\begin{abstract}
 The value of an American option is the maximized value of the discounted cash flows from the option. At each time step, one needs to compare the immediate exercise value with the continuation value and decide to exercise as soon as the exercise value is strictly greater than the continuation value. We can formulate this problem as a dynamic programming equation, where the main difficulty comes from the computation of the conditional expectations representing the continuation values at each time step. In \citep{LongstaffSchwarz}, these conditional expectations were estimated using regressions on a finite-dimensional vector space (typically a polynomial basis). In this paper, we follow the same algorithm; only the conditional expectations are estimated using Regression trees or Random forests. We discuss the convergence of the LS algorithm when the standard least squares regression is replaced by regression trees. Finally, we expose some numerical results with regression trees and random forests. The random forest algorithm gives excellent results in high dimensions. 
\end{abstract}

{\bf keywords}: Regression trees, Random forests, Bermudan options, Optimal stopping

\section{Introduction}

Bermudan options are very widespread in financial markets. Their valuation adds a challenge of optimal stopping determination in comparison to European options. Bermudan options offer the investor the possibility to exercise his option at any date of his choice among a certain number of dates prior to the option expiry, called exercise dates. Naturally, the option holder will have to find the most optimal date to exercise. To do so, at each exercise date, he will compare the payoff of the immediate exercise to the expected value of continuation of the option and decide to exercise only if the immediate exercise value is the highest. We can formulate this problem as a dynamic programming equation, where the main difficulty comes from the computation of the conditional expectation representing the expected continuation value of the option. Many papers have discussed this issue, starting with regression-based algorithms; see for example \citep{Tsitsiklis} and \citep{carriere}. Also, in this category falls the most commonly used method for pricing Bermudan options which is the Least Squares Method (LSM) presented by Longstaff and Schwarz in  \citep{LongstaffSchwarz} where the conditional expectation is estimated by a least squares regression of the post realized payoffs from continuation on some basis functions of the state variables (usually polynomial functions). Alternatively, an approximately optimal but interpretable policy can be obtained as the solution to a sample average approximation problem, see~\cite{ciocan2022interpretable}. Another class of algorithms focuses on quantization approaches, see for example \citep{Bally}. The algorithm consists in computing the conditional expectations by projecting the diffusion on some optimal grid. We also have a class of duality based methods that give an upper bound on the option value for a given exercise policy by adding a quantity that penalizes the incorrect exercise decisions made by the sub-optimal policy, see for example \citep{rogers}, \citep{Broadie} and \citep{chaos}. The last class of algorithms is based on machine learning techniques. Deep learning has been widely used to solve stochastic control problems \cite{DBLP:journals/corr/HanE16} or their PDE formulation \cite{han2018solving}. Neural networks can also be used to estimate the continuation values as in \citep{kohler, han2018solving, Lelong} or the boundary \citep{reppen2022neural}. The continuation value can also be approximated using Gaussian processes as in \citep{Ludkovski}. Our solution falls in this last category of algorithms. We examine Bermudan options' prices when the continuation values' estimation is done using regression trees or random forests.\par

Let $X, Y$ be two random variables with vales in $[0,1]^d$ and $\mathbb{R}$ respectively. A regression tree approximates the conditional expectation $\mathbb{E}\left[Y|X \right]$ with a piecewise constant function. The tree is built recursively, generating a sequence of partitions of $[0,1]^d$ that are finer and finer. The approximation value on each set in the partition can be seen as a terminal leaf of the tree. This algorithm is very simple and efficient. However, it can easily over-fit the data, which results in high generalization errors. To solve this issue, we use ensemble methods to aggregate multiple trees, which means that we create multiple trees and then combine them to produce improved results. We suggest using random forests (see \citep{Breiman}). This method consists in averaging a combination of trees where each tree depends on a random vector sampled independently and identically for each tree in the forest. This vector will allow to differentiate the trees in the random forest and can be chosen in different ways. For example, one can draw for each tree a sub-sample of training from the global training data without replacement (this method is called bagging and is thoroughly studied in \citep{bagging}). A second method is random split selection, where at each node, the split is selected at random from among the $K$ best splits, see \citep{Dietterich}. Other methods for aggregating regression trees into random forests can be found in the literature, see for example \citep{Breiman} or \citep{ho}.\\
The structure of the paper will be as follows. First, we present the regression trees algorithm and the algorithm of least squares using regression trees. Then, we proceed to present  some convergence results for regression trees and study the convergence of the LS algorithm when regression trees are used to estimate the continuation values. We prove two results: the convergence of the regression solution for infinite sample size as the depth of the regression tree goes to infinity and the convergence of the empirical price as the number of samples go to infinity for a fixed depth. Then, we briefly talk about Random Forests before we finally study some numerical examples.

\section{Regression trees}
\label{section2}
Let $X$ be a r.v with values in $[0,1]^d$ and $Y$ a squared integrable real-valued r.v. We want to approximate the conditional expectation $\mathbb{E}[Y|X]$. Throughout this paper, we will consider for computational convenience that $X$ has a density $f_X$ in $[0,1]^d$ w.r.t the Lebesgue measure. We assume given a training sample $D_M=\{(X_1, Y_1), \ldots, (X_M, Y_M) \in [0,1]^d \times \mathbb{R} \}$ where the $(X_i, Y_i)$'s are i.i.d random variables following the law of $(X, Y)$. An approximation using a regression tree consists in writing the conditional expectation as a piecewise constant function of $X$. Each domain where the function is constant can be seen as a terminal leaf of a tree. Let us start with the one-dimensional case $(d=1)$. Consider the function
\begin{equation*}
    m_M: y_l \in \R, y_r \in \R, x \in [0, 1] \longmapsto \frac{1}{M} \sum_{i=1}^M (Y_i - y_l \ind{X_i \le x} - y_r \ind{X_i > x})^2.
\end{equation*}
With probability $q >0$, choose $x$ as the midpoint $x=\frac{1}{2}$ and solve the minimization problem $\inf_{y_l, y_r} m_M(y_l, y_r, \frac{1}{2})$. With probability $0 < 1 - q < 1$, solve the minimization problem $\inf_{y_l, y_r, x} m_M(y_l, y_r, x)$. For a fixed $x^*$, the optimal values of $y_l$ and $y_r$ are given by 
\begin{equation}
\label{eq:yLyR}
y_r^*= \frac{\sum_{i=1}^M Y_i \mathbb{1}_{\{X_i > x^*\}}}{\sum_{i=1}^M \mathbb{1}_{\{X_i > x^*\}}}; \quad
y_l^* = \frac{\sum_{i=1}^M Y_i \mathbb{1}_{\{X_i \leq x^*\}}}{\sum_{i=1}^M \mathbb{1}_{\{X_i \leq x^*\}}}.
\end{equation}
The problem $\inf_{y_l, y_r, x} m_M(y_l, y_r, x)$ may admit more than one minimizer. To make the tree unique, we choose the minimizer $(y_l^*, y_r^*, x^*)$ with the minimal third component, ie any solution $(y_l, y_r, x)$ to $\inf_{y_l, y_r, x} m_M(y_l, y_r, x)$ satisfies $x > x^*$.

Once the threshold $x^*$ is determined, we split the samples into two groups following the sign of $X_i - x^*$ and repeat the process for each group. We stop the process if introducing a new leaf does not improve the MSE (ie when $x^*$ is on the boundary of the interval on which we solve the minimization problem) or when enough iterations have been made. In the end, we have a tree that approximates the conditional expectation with a piecewise constant function. The regression trees are an algorithmic tool to find an adapted partition and the corresponding weights of this piecewise constant function. \\
In the multi-dimensional case, we choose the direction (the index along which the optimization is performed) uniformly for each new split. Then, the process is iterated as in the one-dimensional case. 

We denote the resulting tree by $\hat{\mathcal{T}}_p^M(X)$ where $p$ represents the depth of the tree, i.e., the number of iterations done in the process described above. A tree of depth $p$ has $2^p$ leaves.\\
When the size of the training data is infinite, we use the same procedure as above but with expectations rather that empirical means. Consider the function
\begin{equation*}
    m: y_l \in \R, y_r \in \R, x \in [0, 1] \longmapsto \E\left[ (Y - y_l \ind{X \le x} - y_r \ind{X > x})^2\right].
\end{equation*}
Equation~\eqref{eq:yLyR} becomes
\begin{equation*}
    y_r^* = \mathbb{E}\left[ Y | X >x^*\right]; \quad
    y_l^* = \mathbb{E}\left[ Y | X \leq x^*\right].
\end{equation*}
Again, when we optimize the value of $x^*$, we choose the minimizer $(y_l^*, y_r^*, x^*)$ with the minimal third component.  We denote the tree of depth $p$ obtained with an infinite data set by $\mathcal{T}_p(X)$.


\section{LS algorithm with regression trees}
\subsection{Notation}

For $p \in \mathbb{N}$, let $\cA = \left(\prod_{j=1}^d \left[a_p^{i-1}(j), a_p^i(j)\right)\right)_{1 \leq i \leq 2^p}$ be a partition of $[0,1]^d$ with $2^p$ elements. We write 
\begin{equation*}
    \left[a_p^{i-1} , a_p^i\right[ := \prod_{j=1}^d \left[a_p^{i-1}(j) , a_p^i(j)\right[
\end{equation*}
For $(\alpha_p^i)_{1 \le i \le 2^p} \in \R^{2^p}$, we define $\cP_p$ as the piecewise constant function on the partition $\cA$ with values $\alpha_p^i$. For $x \in [0,1]^d$,
\begin{equation*}
    \cP_p(x, (a_p^i)_{0 \le i \le 2^p}, (\alpha_p^i)_{1 \le i \le 2^p}) = 
    \sum_{i=1}^{2^p} \alpha_p^i \mathbb{1}_{\{x \in [a^{i-1}_p, a^{i}_p)\}}
\end{equation*}

If $\cA$ denotes the partition obtained in the regression tree $\cT_p$, and we choose 
\begin{equation*}
    \alpha_p^i = \mathbb{E}\left[Y|X\in \left[a_p^{i-1}, a^i_p\right) \right],
\end{equation*}
then the regression tree $\mathcal{T}_p(X)$ can be written as follows
\begin{equation}
    \label{eq:Tp_P}
    \mathcal{T}_p(X) = \cP_p(X, (a_p^i)_{0 \le i \le 2^p}, (\alpha_p^i)_{1 \le i \le 2^p})
\end{equation}

Similarly, if $\cA^M = \left(  \left[a_p^{i-1, M} , a_p^{i,M}\right)  \right)_{1 \le i \le 2^p}$ denotes the partition obtained in the regression tree $\hat \cT_p^M$ and \begin{equation*}
    \alpha_p^{i,M} = \frac{\sum_{m=1}^M Y_m \ind{X_m \in \left[a_p^{i-1}, a^i_p\right)}}{\sum_{m=1}^M \ind{X_m \in \left[a_p^{i-1,M}, a^{i,M}_p\right)}},
\end{equation*}
then the regression tree $\hat{\mathcal{T}}^M_p(X)$ can be written as follows
\begin{equation}
    \label{eq:hatTp_P}
    \hat{\mathcal{T}}^M_p(X) = \cP_p(X, (a_p^{i,M})_{0 \le i \le 2^p}, (\alpha_p^{i,M})_{1 \le i \le 2^p})
\end{equation}

\subsection{Description of the algorithm}
Let $T$ be a fixed maturity, and consider the filtered probability space $(\Omega, \mathcal{F}, (\mathcal{F}_t)_{0 \leq t \leq T}, \mathbb{P})$ where $\mathbb{P}$ is the risk neutral measure. Consider a Bermudan option that can be exercised at dates $0=t_0 < t_1 < t_2 < \cdots < t_N = T$. When exercised at time $t_j$, the option's discounted payoff is given by $Z_{t_j} = h_j(X_{t_j})$ with $(X_{t_j})_j$ being an adapted Markov process taking values in $[0,1]^d$ such that for every $j$, $X_{t_j}$ has a density $f^j_X$ on $[0,1]^d$ uniformly bounded from below on any compact set of $]0,1[^d$ by a strictly positive number. The discounted value $(U_{t_j})_{0\leq j\leq N}$ of this option is given by 
\begin{equation}
    U_{t_j} = \esssup_{\tau \in \cS_{t_j, T}} \mathbb{E}\left[ Z_{\tau} | \mathcal{F}_{t_j}\right],
\end{equation}
where $\cS_{t_j, T}$ is the set of all stopping times taking values in $\{t_j,\dots,T\}$.  Using the Snell envelope theory, we know that $U$ solves the dynamic programming equation

\begin{equation}
\left\{
    \begin{array}{ll}
        U_{t_N} & = Z_{t_N}\\
        U_{t_j} & = \max\left(Z_{t_j}, \mathbb{E}\left[ U_{t_{j+1}} | \mathcal{F}_{t_j}\right]\right) \textnormal{ for } 1\leq j \leq N - 1.
    \end{array}
\right.
\end{equation}

This equation can be rewritten in terms of optimal policy as follows

\begin{equation}
\left\{
    \begin{array}{ll}
        \tau_N & = t_N = T\\
        \tau_j & = t_j \mathbb{1}_{\left\{ Z_{t_j} \geq \mathbb{E}\left[ Z_{\tau_{j+1}}|\mathcal{F}_{t_j}\right]\right\}} + \tau_{j + 1} \mathbb{1}_{\left\{ Z_{t_j} < \mathbb{E}\left[Z_{\tau_{j+1}}| \mathcal{F}_{t_j}\right]\right\}} \textnormal{ for } 1\leq j \leq N - 1
    \end{array}
\right.
\end{equation}

where $\tau_j$ is the smallest optimal stopping time after $t_j$. As we are in a Markovian setting, we can write $\mathbb{E}\left[Z{\tau_{j+1}} | \mathcal{F}_{t_j}\right] = \mathbb{E}\left[Z_{\tau_{j+1}} | X_{t_j} \right]$. The main difficulty in solving this equation comes from the computation of the continuation value $\mathbb{E}\left[ Z_{\tau_{j+1}} | X_{t_j} \right]$ . In the Least Squares approach presented by \citep{LongstaffSchwarz}, this conditional expectation is estimated by a linear regression on a countable set of basis functions of $X_{t_j}$. In our approach, we suggest to estimate  it using a regression tree of depth $p$. Let $\mathcal{T}_p^j(X_{t_j})$ be the regression tree approximating $\mathbb{E}\left[ Z_{\tau_{j+1}} | X_{t_j} \right]$ with an infinite data set. The algorithm solves the following policy

\begin{equation}
\label{eq:dpp-taup}
\left\{
    \begin{array}{ll}
       \tau_N^p & = t_N = T\\
        \tau_j^p & = t_j \mathbb{1}_{\left\{ Z_{t_j} \geq \mathcal{T}_p^j(X_{t_j})\right\}} + \tau_{j + 1} \mathbb{1}_{\left\{ Z_{t_j} < \mathcal{T}_p^j(X_{t_j})\right\}} \textnormal{  for } 1\leq j \leq N - 1.
    \end{array}
\right.
\end{equation}

We sample $M$ paths of the model $X_{t_0}^{(m)}, \ldots, X_{t_N}^{(m)}$ along with the corresponding payoff paths $Z_{t_0}^{(m)},\ldots, Z_{t_N}^{(m)}$, $m=1,\ldots, M$. For each date $j=1,\ldots, N-1$ we approximate the conditional expectations $\mathbb{E}[Z_{\tau_{j+1}}|X_{t_j}]$ on the path $m$ using the regression tree $\hat{\cT}^{j,M}_p(X^{(m)}_{t_j})$ built with the samples $(X_{t_j}^{(m)}, Z^{m}_{\hat{\tau}_{j+1}^{p,(m)}})_{1 \le m \le M}$ where
\begin{equation}
        \label{equation6}
        \left\{
            \begin{array}{ll}
                \hat{\tau}_N^{p, (m)} & = t_N = T\\
                \hat{\tau}_j^{p,(m)} & = t_j \mathbb{1}_{\left\{ Z_{t_j}^{(m)} \geq \hat{\mathcal{T}}_p^{j,M}(X_{t_j}^{(m)})\right\}} + \hat{\tau}_{j+1}^{(m)} \mathbb{1}_{\left\{ Z_{t_j}^{(m)} < \hat{\mathcal{T}}_p^{j, M}(X_{t_j}^{(m)})\right\}} \textnormal{ for } 1\leq j \leq N - 1.
            \end{array}
        \right.
    \end{equation}
Finally, the time-0 price of the option is approximated by
\begin{equation}
    \label{equation7}
    U_0^{p,M} = \max\left(Z_0, \frac{1}{M}\sum\limits_{m=1}^M Z_{\hat{\tau}_1^{p,(m)}}^{(m)}\right).
\end{equation}

\section{Convergence of the algorithm}
\subsection{Notation}

For each time step $1 \le j \le N-1$, let $\theta_j^p = ((a^p_{i,j})_{0 \le i \le 2^p}, (\alpha^p_{i,j})_{1 \le i \le 2^p})$ be the parameters of $\cT_p^j$ and $\hat{\theta}_j^{p,M} = ((a^{p,M}_{i,j})_{0 \le i \le 2^p}, (\alpha^{p,M}_{i,j})_{1 \le i \le 2^p})$ be the parameters of $\cT_{p,M}^j$. Note that with this notation and Equations~\eqref{eq:Tp_P} and~\eqref{eq:hatTp_P}, we have
\begin{align}
    \label{eq:link-P}
    \mathcal{T}_p^j(X_{t_j}) = \cP_p(X_{t_j}, \theta_j^p) \\
    \hat{\mathcal{T}}_p^{j,M}(X_{t_j}^{(m)}) = \cP_p(X_{t_j}^{(m)}, \hat{\theta}_j^{p,M})
\end{align}
where for te sake of conciseness we have shrunk the partition and weight parameters of the tree into a single multi-dimensional parameter.

We introduce the vector $\vartheta$ of the coefficients of the successive expansions $\vartheta^p = (\theta_1^p,\ldots, \theta_{N-1}^p)$ and $\hat{\vartheta}^{p,M} = (\hat{\theta}_1^{p,M},\ldots, \hat{\theta}_{N-1}^{p,M})$.\\
Let $t^p = (t_1^p,\ldots, t_{N-1}^p) \in \left(([0,1]^d)^{2^p + 1}\right)^{N-1}$ be a deterministic parameter, $z = (z_1,\ldots, z_N) \in \mathbb{R}^N$ and $x=(x_1,\ldots, x_N)\in ([0,1]^d)^N$ be deterministic vectors. Following the notation of \citep{clement}, we define the function $F=(F_1,\ldots,F_N)$ taking values in $\R^N$ by 
\begin{equation*}
    \left\{
            \begin{array}{ll}
                F_N(t^p,z,x) &= z_N\\
                F_j(t^p,z,x) &= z_j\mathbb{1}_{\{z_j \geq \mathcal{P}_p(x_j, t_j^p)\}} + F_{j+1}(t^p,z,x)\mathbb{1}_{\{z_j < \mathcal{P}_p(x_j,t_j^p)\}}, \textnormal{ for } 1 \leq j \leq N-1.
            \end{array}
        \right.
\end{equation*}
$F_j(t^p,z,x)$ only depends on $t_j^p, \ldots, t_{N-1}^p$ and not on the first $j - 1$ components of $t^p$. Using~\eqref{eq:link-P}, we have
\begin{align*}
    F_j(\vartheta^p,Z,X) &= Z_{\tau_j^p},\\
    F_j(\hat{\vartheta}^{p,M},Z^{(m)},X^{(m)}) &= Z_{\hat{\tau}_j^{p,(m)}}^{(m)}.
\end{align*}
Moreover, we clearly have that for all $t^p \in (([0,1]^d)^{2^p})^{N-1}$:
\begin{equation}
    \label{equation8}
    \abs*{F_j(t^p,Z,X)} \leq \max\limits_{k\geq j}\abs*{Z_{t_k}}.
\end{equation}
\subsection{Convergence of the conditional expectations}
\subsubsection{Preliminary results}

We define
\begin{equation*}
    J_p\left(a_p^0, \ldots, a_p^p\right) := \mathbb{E} \left[ \abs*{\sum_{i=1}^p \alpha^i_p \mathbb{1}_{\{X \in \left[a^{i-1}_p - a^i_p\right)\}} - \mathbb{E}\left[Y|X\right]}^2\right].
\end{equation*}
with $\alpha_p^i = \E[Y | X \in [a^{i-1}_p - a^i_p)]$. We will use the following standard lemma to prove the convergence of random tree estimators.  \begin{lemma}
    \label{lemma4.1}
    Let $X$ be a r.v with a density $f_X$ w.r.t to the Lebesgue measure on $[0,1]^d$ and $Y$ be a real valued square integrable random variable. Let $\left(\left( \left[a^{i-1}_p, a^i_p\right)\right)_{1 \leq i \leq p}\right)_{p \in \mathbb{N}}$ be a sequence of partitions of $[0,1]^d$ such that  $\lim\limits_{p \to \infty} \max\limits_{1\leq i\leq p}\max\limits_{1\leq j\leq d} \abs*{a^i_p(j) - a^{i-1}_p(j)} = 0 $.
    Then, 

    \begin{equation*}
        \lim_{p \to \infty} J_p\left(a^0_p,\ldots, a^p_p\right)= 0.
    \end{equation*}
\end{lemma}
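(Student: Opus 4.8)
The plan is to recognize $J_p$ as an $L^2$ projection error and then bound it by an explicit piecewise-constant approximant. Write $A_p^i := \left[a_p^{i-1}, a_p^i\right)$ for the cells of the $p$-th partition and let $\mathcal{P}_p := \sigma\left(\{X \in A_p^i\} : 1 \leq i \leq p\right)$ be the finite $\sigma$-algebra they generate. Since $\alpha_p^i = \mathbb{E}\left[Y / X \in A_p^i\right] = \mathbb{E}\left[Y \mathbb{1}_{\{X \in A_p^i\}}\right]/\mathbb{P}(X \in A_p^i)$ (cells of null $X$-mass contribute nothing and can be discarded), the sum $\sum_i \alpha_p^i \mathbb{1}_{\{X \in A_p^i\}}$ is exactly $\mathbb{E}\left[Y / \mathcal{P}_p\right]$, the orthogonal projection of $Y$ onto $L^2(\mathcal{P}_p)$. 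As $\mathcal{P}_p \subset \sigma(X)$, setting $m(X) := \mathbb{E}\left[Y/X\right]$ the tower property gives $\mathbb{E}\left[Y / \mathcal{P}_p\right] = \mathbb{E}\left[m(X)/\mathcal{P}_p\right]$, so
\[
J_p = \left\| m(X) - \mathbb{E}\left[m(X)/\mathcal{P}_p\right] \right\|_{L^2}^2 .
\]
Because this is an orthogonal projection, it is the best $L^2$ approximation of $m(X)$ among $\mathcal{P}_p$-measurable functions; hence for \emph{any} function $g$ that is constant on each cell $A_p^i$ we have $J_p \leq \left\| m(X) - g(X) \right\|_{L^2}^2$. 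The whole problem then reduces to exhibiting, for each $p$, one good piecewise-constant (on the partition) function.

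To build it, note first that $Y \in L^2$ forces $m(X) \in L^2(\mu_X)$, where $\mu_X$ is the law of $X$, a finite Borel measure on the compact set $[0,1]^d$. Continuous functions are dense in $L^2(\mu_X)$ (regularity of finite Borel measures on a compact metric space), so given $\varepsilon > 0$ I fix a continuous $\phi : [0,1]^d \to \mathbb{R}$ with $\left\| m(X) - \phi(X) \right\|_{L^2} \leq \varepsilon$. I then discretize $\phi$ along the partition: picking any $c_p^i \in A_p^i$, set $\phi_p := \sum_{i=1}^p \phi(c_p^i)\, \mathbb{1}_{A_p^i}$, which is constant on each cell. Since $[0,1]^d$ is compact, $\phi$ is uniformly continuous with some modulus $\omega_\phi$, and for $x \in A_p^i$ we have $\abs{\phi(x) - \phi_p(x)} = \abs{\phi(x) - \phi(c_p^i)} \leq \omega_\phi(\delta_p)$, where $\delta_p := \max_{1 \leq i \leq p} \operatorname{diam}(A_p^i) \leq \sqrt{d}\,\max_{i,j}\abs{a_p^i(j) - a_p^{i-1}(j)} \to 0$ by the mesh hypothesis. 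Hence $\left\| \phi - \phi_p \right\|_\infty \leq \omega_\phi(\delta_p) \to 0$, and a fortiori $\left\| \phi(X) - \phi_p(X) \right\|_{L^2} \to 0$.

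Combining, the best-approximation bound applied with $g = \phi_p$ together with the triangle inequality gives
\[
J_p^{1/2} \leq \left\| m(X) - \phi_p(X) \right\|_{L^2} \leq \left\| m(X) - \phi(X) \right\|_{L^2} + \left\| \phi(X) - \phi_p(X) \right\|_{L^2} \leq \varepsilon + \omega_\phi(\delta_p) .
\]
Letting $p \to \infty$ yields $\limsup_p J_p^{1/2} \leq \varepsilon$, and since $\varepsilon > 0$ is arbitrary, $J_p \to 0$. The main subtlety to flag is that the partitions are \emph{not} assumed nested, so the tempting shortcut of treating $\mathbb{E}\left[m(X)/\mathcal{P}_p\right]$ as a martingale and invoking $L^2$ martingale convergence is unavailable; the density-plus-uniform-continuity argument bypasses this completely, relying only on the best-approximation property of conditional expectation and compactness of $[0,1]^d$. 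The density assumption on $X$ plays only a minor role here (controlling cell masses and making boundary points null), and the two background facts worth double-checking are the density of continuous functions in $L^2(\mu_X)$ and the membership $m(X) \in L^2$, both standard consequences of $Y \in L^2$.
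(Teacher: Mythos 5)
Your proof is correct, and while it shares the paper's core skeleton (approximate the regression function by a continuous one, discretize that along the partition using uniform continuity on the compact cube, conclude by density of continuous functions in $L^2$), you route the whole argument through the orthogonal-projection property of conditional expectation, and that is a genuinely different and in fact tighter organization. The paper instead works directly with the Lebesgue cell-averages $\frac{1}{\mu(A_p^i)}\int_{A_p^i} f(s)\,ds$ and identifies them with the coefficients $\alpha_p^i=\mathbb{E}[Y\,/\,X\in A_p^i]$ --- an identity that only holds when the density $h_X$ is constant on each cell; your identification of $\sum_i\alpha_p^i\mathbb{1}_{\{X\in A_p^i\}}$ as $\mathbb{E}[m(X)\,/\,\mathcal{P}_p]$, the $\mu_X$-weighted projection, is the correct reading of $J_p$, and the best-approximation inequality $J_p\le\|m(X)-g(X)\|_{L^2}^2$ makes the precise choice of cell constants irrelevant anyway. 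The second thing your organization buys is a rigorous density step: the paper's closing sentence (``the set of continuous functions being dense in $\mathbb{L}^2$, the result still holds'') silently requires the map $f\mapsto f^{(p)}$ to be non-expansive in $L^2(\mu_X)$ uniformly in $p$ in order to run the usual three-term estimate, and that non-expansiveness is exactly the projection property you invoke explicitly. Your remark that the partitions are not nested, so $L^2$ martingale convergence is unavailable, is also a correct and worthwhile observation. The only cost of your route is that it leans on the abstract Hilbert-space characterization of conditional expectation rather than the elementary Cauchy--Schwarz computation on each cell that the paper displays; both are standard, and yours generalizes more readily.
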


\begin{theorem}
\label{theorem4.2}
\begin{equation*}
     \lim_{p\to\infty}\mathbb{E}\left[\abs*{\mathcal{T}_p(X) - \mathbb{E}[Y|X]}^2\right] = 0.
\end{equation*}
\end{theorem}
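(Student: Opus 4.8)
The plan is to recognise the left-hand side as the within-cell variance of the regression function and to show it vanishes because the cell containing a typical point shrinks to a point. Write $g(x) = \mathbb{E}[Y/X=x]$ and let $\mathcal{G}_p$ denote the (random) $\sigma$-field generated by the partition associated with $\mathcal{T}_p$. Since $\alpha_p^i = \mathbb{E}[Y/X\in[a_p^{i-1},a_p^i)]$, the tree is exactly the $\mathbb{L}^2$-orthogonal projection of $g(X)$ onto $\mathcal{G}_p$-measurable functions, i.e. $\mathcal{T}_p(X) = \mathbb{E}[g(X)/\mathcal{G}_p]$. Consequently $\mathbb{E}[\abs*{\mathcal{T}_p(X)-g(X)}^2]$ is the mean-square error of this projection, that is, the cell-average of the oscillation of $g$ on each cell. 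This is exactly $J_p$ of Lemma \ref{lemma4.1}, now evaluated at the tree's own (random) partition, so the whole difficulty is to control the fineness of that partition.

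First I would reduce to a continuous regression function. As in the proof of Lemma \ref{lemma4.1}, continuous functions are dense in $\mathbb{L}^2([0,1]^d)$; since conditional expectation is an $\mathbb{L}^2$-contraction, replacing $g$ by a continuous $\tilde g$ with $\norm{g-\tilde g}_{\mathbb{L}^2} < \epsilon$ changes $\mathbb{E}[\abs*{\mathcal{T}_p(X)-g(X)}^2]^{1/2}$ by at most $2\epsilon$ uniformly in $p$ (reverse triangle inequality plus non-expansiveness of the projection). Hence it suffices to prove the convergence when $g$ is continuous, and therefore uniformly continuous on the compact $[0,1]^d$. For such $g$, on the cell $C_p(X)$ containing $X$ the projected value is a weighted average of $g$ over $C_p(X)$, so $\abs*{\mathcal{T}_p(X)-g(X)} \le \omega_g(\mathrm{diam}\,C_p(X))$, where $\omega_g$ is the modulus of continuity of $g$. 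It then remains to show $\mathrm{diam}\,C_p(X) \to 0$ almost surely and to pass to the limit by dominated convergence, the integrand being bounded by $(2\sup\abs*{g})^2$.

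The crux — and the step where the midpoint rule pays off — is this almost-sure shrinking of $C_p(X)$. Here I would exploit that we only need the single cell \emph{containing} the random point $X$ to shrink, not the global mesh of the partition (which, because of the unconstrained optimised splits, need not vanish). Follow the path of $X$ down the tree: at each node the splitting direction is uniform on $\{1,\dots,d\}$ and, independently, the split is a midpoint split with probability $q$, these choices being independent of $(X,Y)$ and across depths. Thus along $X$'s path each level is a ``midpoint split in coordinate $j$'' with probability $q/d>0$ independently, so by Borel--Cantelli the number $N_j^{(p)}$ of such splits tends to $+\infty$ almost surely for every $j$. Each midpoint split in coordinate $j$ halves the side of the current cell in that coordinate while the optimised splits only shrink it further, whence the $j$-th side of $C_p(X)$ is at most $2^{-N_j^{(p)}}$ and $\mathrm{diam}\,C_p(X)\le\bigl(\sum_{j=1}^d 2^{-2N_j^{(p)}}\bigr)^{1/2}\to 0$ almost surely. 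Combined with the two previous reductions, this yields the claim.

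I expect this almost-sure shrinking to be the main obstacle. The subtlety is that the statement is generally \emph{false} for the worst cell of the partition (one can reach deep leaves along paths carrying no, or directionally unbalanced, midpoint splits, whose size the greedy optimisation does not control), and it must instead be argued for the typical cell traversed by $X$. It is precisely the random midpoint mechanism with fixed probability $q$ that forces this single path to refine in every coordinate, which is the technical reason the authors introduced it.
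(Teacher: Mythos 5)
Your proof is correct, and on the decisive step it takes a genuinely different route from the paper's. The outer skeleton is shared: both identify $\mathcal{T}_p(X)$ as the projection of $g(X)=\mathbb{E}[Y/X]$ onto the partition, reduce to a continuous $g$ by density, control the error on each cell by the modulus of continuity, and conclude by dominated convergence (in the paper this is Lemma \ref{lemma4.1} together with the bound $\mathbb{E}[\abs*{\mathcal{T}_p(X)}^2/\mathcal{G}]\leq \mathbb{E}[Y^2/\mathcal{G}]$). The difference lies in why the cells shrink. The paper establishes that the \emph{global} mesh $\max_{i,j}\abs*{a_p^i(j)-a_p^{i-1}(j)}$ vanishes a.s., via the recursion $\mathbb{E}[\max_i\abs*{a_p^i(j)-a_p^{i-1}(j)}]\leq(1-\frac{q}{2d})\,\mathbb{E}[\max_i\abs*{a_{p-1}^i(j)-a_{p-1}^{i-1}(j)}]$, summability of the resulting bound and Tonelli, and then conditions on the splitting strategy $\mathcal{G}$ so as to apply Lemma \ref{lemma4.1}. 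You instead prove only that the single cell $C_p(X)$ containing $X$ shrinks, using that the (direction, midpoint) draws met along $X$'s path are i.i.d.\ and the second Borel--Cantelli lemma coordinate by coordinate. Your caveat about the worst cell is pertinent: the paper's recursion reads as if one draw governed all $2^{p-1}$ cells of a level at once, whereas with independent per-node draws the expected maximum need not contract by a fixed factor (on average $2^p(1-q/d)^p\to\infty$ leaves have a root path containing no midpoint split in direction $j$, and the optimised splits alone do not control their $j$-width), and the displayed exponent $2^p$ should in any case read $p$. Your local argument sidesteps all of this and suffices because the theorem only involves $\mathbb{E}[\abs*{\mathcal{T}_p(X)-g(X)}^2]$; it is the cleaner way to exploit the random midpoint device. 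Two small points to write out in a final version: (i) the i.i.d.\ claim along the path holds because the node visited at level $k$ is determined by $X$ and the draws at levels $<k$, hence is independent of the fresh draw at that node; (ii) choose the continuous approximant $\tilde g$ by density in $\mathbb{L}^2$ of the law of $X$ rather than of Lebesgue measure, so that the contraction estimate applies directly to $\mathbb{E}[\abs*{g(X)-\tilde g(X)}^2]$.
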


\begin{proof}
With probability $\frac{1}{d}$, the index $j$ is chosen for optimisation. In the other $d-1$ cases, we do not even cut along that direction, in which case the interval length is at most equal to the length of the largest interval at step $p-1$. When the index $j$ is chosen: with probability $q$, the length of the interval is cut in two, and with probability $1 - q$, it is cut to optimize the MSE. In that case, the interval length is at most equal to the length of the largest interval at step $p-1$.

For all $1 \leq j \leq d$
\begin{align*}
    &\mathbb{E}\left[\max_{1 \leq i \leq 2^p}\abs*{a^i_p(j) - a^{i-1}_p(j)}\right]\\ &\leq \frac{1}{d}\left[ q \frac{1}{2}\mathbb{E}\left[\max_{1 \leq i \leq 2^{p - 1}}\abs*{a^i_{p - 1}(j) - a^{i-1}_{p-1}(j)}\right] + (1 - q) \mathbb{E}\left[\max_{1 \leq i \leq 2^{p - 1}}\abs*{a^i_{p - 1}(j) - a^{i-1}_{p-1}(j)}\right]\right] \\
    &+ \frac{d-1}{d}\mathbb{E}\left[\max_{1 \leq i \leq 2^{p - 1}}\abs*{a^i_{p - 1}(j) - a^{i-1}_{p - 1}(j)}\right]\\
    &\leq (1 - \frac{q}{2d}) \mathbb{E}\left[\max_{1 \leq i \leq 2^{p - 1}}\abs*{a^i_{p-1}(j) - a^{i-1}_{p-1}(j)}\right]\\
    &\leq (1 - \frac{q}{2d})^{2^p}
\end{align*}
Since $\sum\limits_{p=0}^{\infty}(1 - \frac{q}{2d})^{2^p} < \infty$, so is $\sum\limits_{p=0}^{\infty} \mathbb{E}\left[\max\limits_{1\leq i\leq 2^p}\abs*{a^i_p(j) - a^{i-1}_p(j)}\right]$.
As $\max\limits_{1\leq i\leq 2^p}\abs*{a^i_p(j) - a^{i-1}_p(j)}$ is non negative for all $p$, using Tonelli's theorem we conclude that $\mathbb{E}\left[\sum\limits_{p=0}^{\infty}\max\limits_{1\leq i\leq 2^p}\abs*{a^i_p(j) - a^{i-1}_p(j)} \right] < \infty$. As a result, the series $\sum\limits_{p=0}^{\infty}\max\limits_{1\leq i\leq 2^p}\abs*{a^i_p(j) - a^{i-1}_p(j)}$  converges a.s. Then, $\lim\limits_{p\to \infty} \max\limits_{1\leq i\leq 2^p} \abs*{a^i_p(j) - a^{i-1}_p(j)} = 0$ a.s for all $j$ and $\lim\limits_{p \to \infty} \max\limits_{1 \leq i \leq 2^p} \max\limits_{1 \leq j \leq d} \abs*{a^i_p(j) - a^{i-1}_p(j)} = 0$.\\\\
Let $\mathcal{G}$ be the $\sigma$-field generated by the splitting strategy (direction choice and threshold strategy).  Conditioning by $\mathcal{G}$ allows us to consider the partition $\left([a^{i-1}_p - a^i_p)\right)_{1 \leq i\leq 2^p}$ deterministic and we can apply Lemma \ref{lemma4.1} to prove

\begin{equation*}
    \lim_{p \to \infty}  \mathbb{E}\left[\abs*{\mathcal{T}_p(X) - \mathbb{E}\left[Y|X\right]}^2| \mathcal{G}\right]= 0 \text{ a.s}.
\end{equation*}

Note that
\begin{align*}
    \mathbb{E}\left[ \abs*{\mathcal{T}_p(X)}^2  | \mathcal{G}\right] &\leq \mathbb{E}\left[ \sum_{i=1}^{2^p} \mathbb{E}\left[Y^2|X\in[a^{i-1}_p, a^i_p)\right] \mathbb{1}_{\{X \in [a^{i-1}_p, a^i_p)\}}| \mathcal{G}\right] \\
    &\leq  \sum_{i=1}^{2^p} \mathbb{E}\left[ \mathbb{E}\left[Y^2 \mathbb{1}_{\{X \in [a^{i-1}_p, a^i_p)\}}|X\in[a^{i-1}_p, a^i_p)\right] | \mathcal{G}\right] \\
    &\leq  \sum_{i=1}^{2^p} \mathbb{E}\left[Y^2 \mathbb{1}_{\{X \in [a^{i-1}_p, a^i_p)\}}| \mathcal{G}\right] \\
    &\leq \mathbb{E}\left[Y^2|\mathcal{G}\right]
\end{align*}

Then,

\begin{align*}
     \mathbb{E}\left[\abs*{\mathcal{T}_p(X) - \mathbb{E}\left[Y|X\right]}^2| \mathcal{G}\right] &\leq 2  \mathbb{E}\left[ \abs*{\mathcal{T}_p(X)}^2  | \mathcal{G}\right] + 2 \mathbb{E}\left[ \mathbb{E}\left[ Y|X\right]^2\right]\\
     &\leq 2 \left( \mathbb{E}\left[Y^2|\mathcal{G}\right] + \mathbb{E}\left[ \mathbb{E}\left[ Y|X\right]^2\right]\right)
\end{align*}

Using Lebesgue's bounded convergence theorem, 

\begin{equation*}
    \lim_{p\to\infty}\mathbb{E}\left[\abs*{\mathcal{T}_p(X) - \mathbb{E}[Y|X]}^2\right] = \lim_{p \to \infty} \mathbb{E}\left[ \mathbb{E}\left[\abs*{\mathcal{T}_p(X) - \mathbb{E}[Y|X]}^2| \mathcal{G}\right]\right] = 0.
    \qedhere
\end{equation*}
\end{proof}

Once the partition of the tree is computed, the values associated to each element of the partition are uniquely determined by a global optimization problem
\begin{proposition}
    \label{prop:optim-tree-weights}
    Let $\cA = \left(\prod_{j=1}^d \left[a_p^{i-1}(j), a_p^i(j)\right)\right)_{1 \leq i \leq 2^p}$ denote the partition associated to $\cT_p$. Then,
    \begin{equation*}
        \E\left[ \left( \cT_p(X) - \E[Y|X] \right)^2 \right] = \inf_{(\alpha_p^i)_{1 \le i \le 2^p}} \E\left[ \left( \cP_p(X, (a_p^i)_{0 \le i \le 2^p}, (\alpha_p^i)_{1 \le i \le 2^p} ) - \E[Y|X] \right)^2 \right]
    \end{equation*}
    and moreover the right hand side admits a unique minimizer given by
    \begin{equation*}
        \alpha_p^i = \mathbb{E}\left[Y|X\in \left[a_p^{i-1}, a^i_p\right) \right].
    \end{equation*}
\end{proposition}
\begin{proof}
    Since the function $\cP_p$ is linear w.r.t the parameter $\alpha$, the function \[\alpha \in \R^{2^p+1} \longmapsto \E\left[ \left( \cP_p(X, (a_p^i)_{0 \le i \le 2^p}, (\alpha_p^i)_{1 \le i \le 2^p} ) - \E[Y|X] \right)^2 \right]\] is strongly convex. Hence, it admits a unique minimizer defined by the first order optimality condition stating that for all $0 \le i \le 2^p+1$,
    \begin{align*}
        &\E\left[ \left( \cP_p(X, (a_p^i)_{0 \le i \le 2^p}, (\alpha_p^i)_{1 \le i \le 2^p} ) - \E[Y|X] \right) \ind{X \in \left[a_p^{i-1}, a^i_p\right) } \right] = 0 \\
        &\E\left[ \left( \alpha_p^i - \E[Y|X] \right) \ind{X \in \left[a_p^{i-1}, a^i_p\right) } \right] = 0\\
        &\alpha_p^i = \mathbb{E}\left[Y|X\in \left[a_p^{i-1}, a^i_p\right) \right].\qedhere
    \end{align*}
\end{proof}
A similar result holds for $\hat{\cT}_p^M$ by replacing the conditional expectations by empirical conditional expectations.

\subsubsection{Approximation of the conditional expectations with regression trees}

The following result proves that the approximation of the Bermudan price produced by our algorithm converges to the true Bermudan price.
\begin{proposition}
\label{prop:pricecv-p}
\begin{equation}
\lim_{p \to \infty} \mathbb{E}\left[Z_{\tau_j^p}|\mathcal{F}_{t_j}\right]= \mathbb{E}\left[Z_{\tau_j}|\mathcal{F}_{t_j}\right]\textnormal\;{\text{in}}\; \mathbb{L}^2(\Omega) \textnormal{ for } 1 \leq j \leq N
\end{equation}
\end{proposition}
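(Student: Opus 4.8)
The plan is to prove the stronger statement that the stopped payoffs themselves converge, namely $\mathbb{E}\left[\abs{Z_{\tau_j^p} - Z_{\tau_j}}^2\right] \to 0$ as $p \to \infty$ for every $j$. The proposition then follows immediately, because conditional expectation is an $L^2$-contraction, so $\mathbb{E}\left[\abs{\mathbb{E}[Z_{\tau_j^p}/\mathcal{F}_{t_j}] - \mathbb{E}[Z_{\tau_j}/\mathcal{F}_{t_j}]}^2\right] \leq \mathbb{E}\left[\abs{Z_{\tau_j^p} - Z_{\tau_j}}^2\right]$. I would argue by backward induction on $j$, the base case $j = N$ being immediate since $\tau_N^p = \tau_N = T$.

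For the inductive step, set $A = \{Z_{t_j} \geq \mathbb{E}[Z_{\tau_{j+1}}/X_{t_j}]\}$ and $A_p = \{Z_{t_j} \geq \mathcal{T}_p^j(X_{t_j})\}$, so that the recursion defining $\tau_j$ together with \eqref{equation5} yields
\begin{equation*}
Z_{\tau_j^p} - Z_{\tau_j} = (Z_{t_j} - Z_{\tau_{j+1}})(\mathbb{1}_{A_p} - \mathbb{1}_A) + (Z_{\tau_{j+1}^p} - Z_{\tau_{j+1}})\mathbb{1}_{A_p^c}.
\end{equation*}
The second (propagation) term has $L^2$-norm at most $\|Z_{\tau_{j+1}^p} - Z_{\tau_{j+1}}\|_{L^2}$, which tends to $0$ by the induction hypothesis, so everything reduces to the first (boundary) term.

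A preliminary step is to show that the date-$j$ tree converges to the true continuation value, i.e. $\mathcal{T}_p^j(X_{t_j}) \to C_j := \mathbb{E}[Z_{\tau_{j+1}}/X_{t_j}]$ in $L^2$. The delicate point here is that the regressed target $Z_{\tau_{j+1}^p}$ itself depends on $p$, so Theorem \ref{theorem4.2} does not apply verbatim. I would split the error, writing $\mathcal{T}_p^j(X_{t_j}) - C_j$ as $\mathbb{E}[Z_{\tau_{j+1}^p} - Z_{\tau_{j+1}} \mid \sigma(\Pi_p)]$ plus $\mathbb{E}[Z_{\tau_{j+1}} \mid \sigma(\Pi_p)] - \mathbb{E}[Z_{\tau_{j+1}}/X_{t_j}]$, where $\Pi_p$ denotes the random partition. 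Conditioning on the splitting $\sigma$-field $\mathcal{G}$ of Theorem \ref{theorem4.2} makes the partition deterministic and turns $W \mapsto \mathbb{E}[W \mid \sigma(\Pi_p)]$ into an $L^2$-contraction, so the first piece is bounded by $\|Z_{\tau_{j+1}^p} - Z_{\tau_{j+1}}\|_{L^2} \to 0$, the envelope $\max_k \abs{Z_{t_k}}$ from \eqref{equation8} providing the required integrability. For the second piece I would apply the argument of Theorem \ref{theorem4.2} to the \emph{fixed} target $Z_{\tau_{j+1}}$ on the partition $\Pi_p$; this is legitimate because that proof only uses the mesh estimate $\mathbb{E}[\max_i \abs{a_p^i(j) - a_p^{i-1}(j)}] \leq (1 - \frac{q}{2d})^{2^p}$, which comes from the midpoint cuts and is independent of which target generated the partition.

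The main obstacle is the boundary term, whose square equals $(Z_{t_j} - Z_{\tau_{j+1}})^2 \mathbb{1}_{A_p \triangle A}$. On $A_p \triangle A$ the value $Z_{t_j}$ lies between $C_j$ and $\mathcal{T}_p^j(X_{t_j})$, whence $\mathbb{1}_{A_p \triangle A} \leq \mathbb{1}_{\{\abs{Z_{t_j} - C_j} \leq \abs{\mathcal{T}_p^j(X_{t_j}) - C_j}\}}$. Since $\mathcal{T}_p^j(X_{t_j}) \to C_j$ in $L^2$, from any subsequence one extracts a further subsequence along which convergence is almost sure, and along it $\mathbb{1}_{A_p \triangle A} \to 0$ almost surely off the event $\{Z_{t_j} = C_j\}$. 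Invoking the standard no-atom assumption $\mathbb{P}(Z_{t_j} = C_j) = 0$ on the exercise boundary and dominating by the integrable $(Z_{t_j} - Z_{\tau_{j+1}})^2$, dominated convergence gives $\mathbb{E}[(Z_{t_j} - Z_{\tau_{j+1}})^2 \mathbb{1}_{A_p \triangle A}] \to 0$ along that subsequence, and the subsequence principle promotes this to convergence of the whole sequence. Combining the two terms closes the induction; this boundary analysis—needing both the no-atom hypothesis and the subsequence device to convert $L^2$ into almost-sure control—is the step I expect to be the real work.
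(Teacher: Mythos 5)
Your overall architecture (backward induction on $j$, splitting the date-$j$ error into a boundary term and a propagation term, and a preliminary lemma that $\mathcal{T}_p^j(X_{t_j})\to C_j:=\mathbb{E}[Z_{\tau_{j+1}}/\mathcal{F}_{t_j}]$ in $L^2$) matches the paper's, and your two-piece treatment of the tree error — an $L^2$-contraction for the $Z_{\tau_{j+1}^p}-Z_{\tau_{j+1}}$ part, the mesh/Lemma \ref{lemma4.1} argument for the fixed target $Z_{\tau_{j+1}}$ — is exactly the paper's $\mathcal{\bar{T}}_p^j$ device in a slightly cleaner arrangement. Where you genuinely diverge is in what you induct on and, consequently, how you kill the boundary term. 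Because you propagate the stronger statement $\mathbb{E}[|Z_{\tau_j^p}-Z_{\tau_j}|^2]\to 0$, your boundary term carries the prefactor $Z_{t_j}-Z_{\tau_{j+1}}$, which is \emph{not} small on $A_p\triangle A$; you are therefore forced to make the indicator itself vanish, which is precisely why you need the no-atom hypothesis $\mathbb{P}(Z_{t_j}=C_j)=0$ together with the subsequence/dominated-convergence device. The paper instead inducts only on the conditional expectations $\mathbb{E}[Z_{\tau_j^p}-Z_{\tau_j}/\mathcal{F}_{t_j}]$: after conditioning, the prefactor becomes $Z_{t_j}-C_j$, and on $A_p\triangle A$ one has $|Z_{t_j}-C_j|\le|\mathcal{T}_p^j(X_{t_j})-C_j|$, so the entire boundary term is dominated pointwise by $|\mathcal{T}_p^j(X_{t_j})-C_j|$, which tends to $0$ in $L^2$ with no assumption on the law of $Z_{t_j}-C_j$. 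Your route thus buys a stronger conclusion (convergence of the stopped payoffs themselves) at the price of a hypothesis that the proposition as stated does not contain — and that hypothesis is genuinely needed for your stronger claim, since on $\{Z_{t_j}=C_j\}$ the exercise decision can flip persistently while $Z_{\tau_{j+1}}-Z_{t_j}$ is only centered conditionally, not small pathwise. So as a proof of the proposition in its stated generality your argument has a real gap at the boundary term; it is repaired exactly by conditioning on $\mathcal{F}_{t_j}$ before estimating that term, as the paper does.
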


\begin{proof}
We proceed by induction. \\
For $j=N$, the proposition is true since $\tau_N^p = \tau_N = T$. Assume that the result holds for $j + 1$, Let us prove that it still holds for j:

\begin{align*}
\mathbb{E}\left[Z_{\tau_j^p } - Z_{\tau_j}|\mathcal{F}_{j_j}\right] &= Z_{t_j} \left( \mathbb{1}_{\left\{Z_{t_j} \geq \mathcal{T}_p^j(X_{t_j})\right\}} - \mathbb{1}_{\left\{ Z_{t_j} \geq \mathbb{E}\left[ Z_{\tau_{j+1}}|\mathcal{F}_{t_j}\right]\right\}}\right)\\
& + \mathbb{E}\left[Z_{\tau_{j+1}^p} \mathbb{1}_{\left\{Z_{t_j} < \mathcal{T}_p^j(X_{t_j})\right\}} - Z_{\tau_{j+1}} \mathbb{1}_{\left\{ Z_{t_j}  < \mathbb{E}\left[ Z_{\tau_{j+1}}|\mathcal{F}_{t_j}\right]\right\}} | \cF_{t_j}\right]\\
&=  (Z_{t_j} - \mathbb{E}[Z_{\tau_{j+1}} | \mathcal{F}_{t_j}]) \left( \mathbb{1}_{\left\{Z_{t_j} \geq \mathcal{T}_p^j(X_{t_j})\right\}} - \mathbb{1}_{\left\{ Z_{t_j} \geq \mathbb{E}\left[ Z_{\tau_{j+1}}|\mathcal{F}_{t_j}\right]\right\}}\right)\\
&+ \mathbb{E}[Z_{\tau_{j+1}}^p - Z_{\tau_{j+1}} | \mathcal{F}_{t_j}] \mathbb{1}_{\left\{Z_{t_j} < \mathcal{T}_p^j(X_{t_j})\right\}}\\
&= A_j^p + \mathbb{E}[Z_{\tau_{j+1}}^p - Z_{\tau_{j+1}} | \mathcal{F}_{t_j}] \mathbb{1}_{\left\{Z_{t_j} < \mathcal{T}_p^j(X_{t_j})\right\}}
\end{align*}

Where $A_j^p$ is defined by

\begin{equation*}
A_j^p = (Z_{t_j} - \mathbb{E}[Z_{\tau_{j+1}} | \mathcal{F}_{t_j}]) \left( \mathbb{1}_{\left\{Z_{j_k} \geq \mathcal{T}_p^j(X_{t_j})\right\}} - \mathbb{1}_{\left\{ Z_{t_j} \geq \mathbb{E}\left[ Z_{\tau_{j+1}}|\mathcal{F}_{t_j}\right]\right\}}\right)
\end{equation*}

On one hand, since the conditional expectation is an orthogonal projection, we have
\begin{equation*}
    \mathbb{E}\left[ \abs*{\mathbb{E}\left[ Z_{\tau_{j+1}^p} - Z_{\tau_{j+1}} | \cF_{t_j}\right]}^2\right] \leq \mathbb{E}\left[ \abs*{\mathbb{E}\left[ Z_{\tau_{j+1}^p} - Z_{\tau_{j+1}} | \cF_{t_{j+1}}\right]}^2\right]
\end{equation*}
and using the induction assumption $\mathbb{E}\left[Z_{\tau_{j+1}^p} - Z_{\tau_{j+1}} | \mathcal{F}_{t_{j+1}}\right] \to 0$ in $\mathbb{L}^2(\Omega)$ when $p \to \infty$. On the other hand,

\begin{align*}
\abs*{A_j^p} &= \abs*{Z_{t_j} - \mathbb{E}[Z_{\tau_{j+1}} | \mathcal{F}_{t_j}]} \abs*{ \mathbb{1}_{\left\{Z_{t_j} \geq \mathcal{T}_p^j(X_{t_j})\right\}} - \mathbb{1}_{\left\{ Z_{t_j} \geq \mathbb{E}\left[ Z_{\tau_{j+1}}|\mathcal{F}_{t_j}\right]\right\}}}\\
&\leq \abs*{Z_{t_j} - \mathbb{E}[Z_{\tau_{j+1}} | \mathcal{F}_{t_j}]}\;
\abs*{ \mathbb{1}_{\left\{\mathbb{E}\left[ Z_{\tau_{j+1}}|\mathcal{F}_{t_j}\right] >Z_{t_j} \geq \mathcal{T}_p^j(X_{t_j})\right\}} - \mathbb{1}_{\left\{\mathcal{T}_p^j(X_{t_j}) > Z_{t_j} \geq \mathbb{E}\left[ Z_{\tau_{j+1}}|\mathcal{F}_{t_j}\right]\right\}}}\\
&\leq \abs*{Z_{t_j} - \mathbb{E}[Z_{\tau_{j+1}} | \mathcal{F}_{t_j}]}\;
\abs*{\mathbb{1}_{\left\{ \abs*{Z_{t_j} - \mathbb{E}\left[ Z_{\tau_{j+1}}|\mathcal{F}_{t_j}\right]} \leq \abs*{\mathcal{T}_p^j(X_{t_j}) - \mathbb{E}\left[ Z_{\tau_{j+1}}|\mathcal{F}_{t_j}\right]}\right\}}}\\
&\leq \abs*{\mathcal{T}_p^j(X_{t_j}) - \mathbb{E}\left[ Z_{\tau_{j+1}}|\mathcal{F}_{t_j}\right]}\\
&\leq \abs*{\mathcal{T}_p^j(X_{t_j}) - \mathbb{E}\left[Z_{\tau_{j+1}^p}| \mathcal{F}_{t_j}\right]} + \abs*{\mathbb{E}\left[Z_{\tau_{j+1}^p}| \mathcal{F}_{t_j}\right] -\mathbb{E}\left[Z_{\tau_{j+1}}| \mathcal{F}_{t_j}\right]}.
\end{align*}

Using the induction assumption, the second term goes to zero in $\mathbb{L}^2(\Omega)$ when $p \to \infty$. Let $\left( \left[a_{i-1}(p), a_i(p) \right)\right)_{1 \leq i \leq 2^p}$ be the partition generated by $\mathcal{T}_p^j$. We define
\begin{equation*}
    \mathcal{\bar{T}}_p^j(X_{t_j}) = \sum_{i=1}^{2^p} \mathbb{E}\left[ Z_{\tau_{j+1}}| X_{t_j} \in \left[a_{i-1}(p), a_i(p) \right)\right] \mathbb{1}_{\{X_{t_j} \in \left[a_{i-1}(p), a_i(p) \right)\}}.
\end{equation*}
Note that $\mathcal{\bar{T}}_p^j$ uses the partition given by $\mathcal{T}_p^j(X_{t_j})$ but the coefficients $\alpha_i(p)$ are given by the conditional expectations of $Z_{\tau_{j+1}}$ w.r.t $X_{t_j}$ and not those of $Z_{\tau_{j+1}^p}$. Using Proposition~\ref{prop:optim-tree-weights}, we have the following inequality stating that $\mathcal{\bar{T}}_p^j(X_{t_j})$ is sub-optimal compared to $\mathcal{T}_p^j(X_{t_j})$
\begin{align*}
    & \mathbb{E}\left[ \abs*{\mathcal{T}_p^j(X_{t_j}) - \mathbb{E}\left[ Z_{\tau_{j+1}^p}|\mathcal{F}_{t_j}\right]}^2\right] \\
    & \leq \mathbb{E}\left[ \abs*{\mathcal{\bar{T}}_p^j(X_{t_j}) - \mathbb{E}\left[ Z_{\tau_{j+1}^p}|\mathcal{F}_{t_j}\right]}^2\right]\\
    &\leq 2 \mathbb{E}\left[ \abs*{\mathcal{\bar{T}}_p^j(X_{t_j}) - \mathbb{E}\left[ Z_{\tau_{j+1}}|\mathcal{F}_{t_j}\right]}^2\right] + 2 \mathbb{E}\left[ \abs*{\mathbb{E}\left[ Z_{\tau_{j+1}}|\mathcal{F}_{t_j}\right] - \mathbb{E}\left[ Z_{\tau_{j+1}^p}|\mathcal{F}_{t_j}\right]}^2\right].
\end{align*}
The second term goes to 0 using the induction assumption. As for the first term, note that the partition obtained with $\mathcal{T}_p^j$ verifies the conditions of Lemma \ref{lemma4.1}. Then, using the same arguments as in the proof of Theorem \ref{theorem4.2}, we can show that the first term also goes to $0$. 
\end{proof}
\subsection{Convergence of the Monte Carlo approximation}
In this section, the depth $p$ of the trees is fixed and we study the convergence with respect to the number of samples $M$. We assume that, for all dates $j$, the trees $\cT^j_p(X_{t_j})$ and $\hat{\cT}^{j,M}_p(X_{t_j})$ are built using the same splitting strategy (ie the same splitting directions and threshold strategies).

\subsubsection{Convergence of optimisation problems}

We present three major results to study the convergence of stochastic optimization problems related to regression trees.

The first result is a uniform strong law of large numbers, see \citep[Chap. 2, Lemma.~A1]{sharpio}, which can be seen as a particular case of the strong law of large numbers in Banach spaces \citep[Corollary 7.10, page 189]{ledoux}.
\begin{lemma}
    \label{lem:uslln}
    Let $(\xi_i)_{i \geq 1}$ be a sequence of i.i.d $\mathbb{R}^n$-valued random vectors and $h: \mathbb{R}^d \times \mathbb{R}^n \to \mathbb{R}$ be a measurable function. Assume that 
    \begin{itemize}
        \item[(i)] For all $\bar \theta \in \R^d$, the function $\theta \in \mathbb{R}^d \mapsto h(\theta, \xi_1)$ is continuous at $\bar \theta$ a.s.,
        \item[(ii)] $\forall C > 0, \mathbb{E}\left[ \sup_{\abs*{\theta} \leq C} \abs*{h(\theta, \xi_1)}\right] < \infty$.
    \end{itemize}
    Then, a.s $\theta \in \mathbb{R}^d \mapsto \frac{1}{n} \sum_{i=1}^n h(\theta, \xi_i)$ converges locally uniformly to the continuous function $\theta \in \mathbb{R}^d \mapsto \mathbb{E}\left[ h(\theta, \xi_1)\right]$, i.e
    \begin{equation*}
        \forall\, C>0, \; \lim_{n \to \infty} \sup_{\abs*{\theta} \leq C}\abs*{\frac{1}{n} \sum_{i=1}^n h(\theta, \xi_i) - \mathbb{E}\left[ h(\theta, \xi_1)\right]} = 0 \textnormal{ a.s.}
    \end{equation*}
\end{lemma}
This lemma is a slight improvement of \citep[Chap. 2, Lemma.~A1]{sharpio}, which was formulated under the assumption that the function $\theta \in \mathbb{R}^d \mapsto h(\theta, \xi_1)$ is almost surely continuous. However, looking closely at their proof, it turns that it sufficient to assume $(i)$ for the conclusion to hold. Condition $(i)$ allows the $\P-$null sets on which the continuity at $\bar \theta$ does not hold to depend on $\bar \theta$.

Consider a sequence of real valued functions $(f_n)_n$ defined on a compact set $K \subset \mathbb{R}^d$ such that there exists a sequence of $(x_n)_n$ satisfying
\begin{equation*}
    f_n(x_n) = \inf_{x \in K} f_n(x).
\end{equation*}
From \citep[Chap.~2, Theorem A1]{sharpio}, we can derive the following lemma
\begin{lemma}
\label{lem:cv-stochoptim}
Assume that the sequence $(f_n)_n$ converges uniformly on $K$ to a continuous function $f$. Let $v^* = \inf_{x \in K} f(x)$ and $\cS^* = \{x \in K \: : \: f(x) = v^*\}$. Then, $f_n(x_n) \to \inf_{x \in K} f(x)$ and $d(x_n, \cS^*) \to 0$.
\end{lemma}

Now, we focus on a canonical minimization problem appearing at each node of the regression tree. Let $[\underline{a}, \bar{a}]$ be a rectangle in $[0, 1]$. For some index $1 \le \delta \le d$, and a real number $a$ in $]\underline{a}^\delta, \bar{a}^\delta[$, we define the two new rectangles
\begin{equation}
    \cR_{\underline{a}, \bar{a}}^{\delta,l}(a) = \{x \in [\underline{a}, \bar{a}] \;:\; x^\delta \le a \}; \quad \cR_{\underline{a}, \bar{a}}^{\delta,r}(a) = \{x^\delta \in [\underline{a}, \bar{a}] \;:\; x > a \}.
\end{equation}
We consider the cost functions
\begin{align}
    \label{eq:mM}
    &m_M: y_l \in \R, y_r \in \R, x \in [\underline{a}^\delta, \bar{a}^\delta] \longmapsto \frac{1}{M} \sum_{i=1}^M \left(Y_i - y_l \ind{X_i \in \cR_{\underline{a}, \bar{a}}^{\delta,l}(x)} - y_r \ind{X_i \in \cR_{\underline{a}, \bar{a}}^{\delta,r}(x)}\right)^2 \\
    \label{eq:m}
    &m: y_l \in \R, y_r \in \R, x \in [\underline{a}^\delta, \bar{a}^\delta]  \longmapsto \E\left[ \left(Y - y_l \ind{X \in \cR_{\underline{a}, \bar{a}}^{\delta,l}(x)} - y_r \ind{X \in \cR_{\underline{a}, \bar{a}}^{\delta,r}(x)}\right)^2\right].
\end{align}
Let $(\hat{y}_l^M, \hat{y}_r^M, \hat{x}^M)$ be the solution to 
\begin{align}
    \label{eq:optim_mM}
    \inf_{y_l, y_r, x \in [\underline{a}^\delta + \varepsilon_0, \bar{a}^\delta - \varepsilon_0] } m_M(y_l, y_r, x)
\end{align}
with the smallest third component for some arbitrary small $\varepsilon_0 >0$. It is clear that~\eqref{eq:optim_mM} may not have a unique solution. Choosing the solution with the minimal third component is a standard way to get a unique minimizer (see for instance~\cite{Seijosen11}).
\begin{lemma}
    \label{lem:sargmax}
    Assume that the density $f_X$ of $X$ satisfies $f_X(x) \ge \underline{f} > 0$ for all $x$ in any compact set of $]0, 1[^d$ and that the minimization problem 
    \begin{align}
        \label{eq:optim_m}
        \inf_{y_l, y_r, x \in [\underline{a}^\delta + \varepsilon_0, \bar{a}^\delta - \varepsilon_0] } m(y_l, y_r, x)
    \end{align}
    has a unique minimizer $(y_l^*, y_r^*, x^*)$.

    Then, $(\hat{y}_l^M, \hat{y}_r^M, \hat{x}^M)$ converges almost surely to $(y_l^*, y_r^*, x^*)$ and $m_M(\hat{y}_l^M, \hat{y}_r^M, \hat{x}^M)$ converges almost surely to $m(y_l^*, y_r^*, x^*)$ when $M$ goes to infinity.
\end{lemma}
\begin{proof}
    Consider the function
    \begin{equation*}
        h(y_l, y_r, x, y, \xi) = \left(y - y_l \ind{\xi \in \cR_{\underline{a}, \bar{a}}^{\delta,l}(x)} - y_r \ind{\xi \in \cR_{\underline{a}, \bar{a}}^{\delta,r}(x)}\right)^2
    \end{equation*}
    Since $X$ has a density on $[0, 1]^d$, the function $(y_l, y_r, x) \longmapsto h(y_l, y_r, x, Y, X)$ satisfies Condition $(i)$ of Lemma~\ref{lem:uslln}. Moreover, for $C >0$

    \begin{align*}
        \E\left[ \sup_{\abs{y_l} \le C, \abs{y_r} \le C, x \in [\underline{a}^\delta + \varepsilon_0, \bar{a}^\delta - \varepsilon_0]} \abs*{h(y_l, y_r, x, Y, X)} \right] \le 2\E[Z^2] + 2C
    \end{align*}
    Hence, we deduce from Lemma~\ref{lem:uslln} that $m_M$ converges a.s. locally uniformly $m$. Note that for a fixed $x \in [\underline{a}^\delta + \varepsilon_0, \bar{a}^\delta - \varepsilon_0]$, the optimal values of $y_l$ and $y_r$ are given by 
    \begin{align*}
        y_l &= \E[Y | X \in \cR_{\underline{a}, \bar{a}}^{\delta,l}(x)] \\
        y_r &= \E[Y | X \in \cR_{\underline{a}, \bar{a}}^{\delta,r}(x)] 
    \end{align*}
    Observe that for $x \in [\underline{a}^\delta + \varepsilon_0, \bar{a}^\delta - \varepsilon_0]$
    \begin{align*}
        \abs{\E[Y | X \in \cR_{\underline{a}, \bar{a}}^{\delta,l}(x)]} &\le \frac{\sqrt{\E[Y^2]}}{\sqrt{\P(X \in \cR_{\underline{a}, \bar{a}}^{\delta,l}(x))}} \le \frac{\sqrt{\E[Y^2]}}{\sqrt{\P(X \in \cR_{\underline{a}, \bar{a}}^{\delta,l}(a^\delta - \varepsilon_0))}} 
    \end{align*}
    The uniform upper bound is uniform is finite thanks to the assumption on the density $f_X$ of $f$. Hence, as a function of $x$, $y_l$ is uniformly bounded. A similar results holds for $y_l$. Then, we deduce from Lemma~\ref{lem:cv-stochoptim}, that $m_M(\hat{y}_l^M, \hat{y}_r^M, \hat{x}^M)$ converges almost surely to $m(y_l^*, y_r^*, x^*)$ and $(\hat{y}_l^M, \hat{y}_r^M, \hat{x}^M)$ converges almost surely to $(y_l^*, y_r^*, x^*)$.
\end{proof}
To the best of our knowledge, without the uniqueness assumption on $(y_l^*, y_r^*, x^*)$, we can only prove that $(\hat{y}_l^M, \hat{y}_r^M, \hat{x}^M)$ converges in probability to $(y_l^*, y_r^*, x^*)$. See for instance~\cite[Corollary~1]{Ferger04}, which yields that for any $\eta >0$, $\P(\hat{x}^M < x^* + \eta) \to 1$. Combining this result with $d((\hat{y}_l^M, \hat{y}_r^M, \hat{x}^M), S^*) \to 0$ a.s. yields the convergence in probability of $(\hat{y}_l^M, \hat{y}_r^M, \hat{x}^M)$ to $(y_l^*, y_r^*, x^*)$.

\subsubsection{Strong law of large numbers}

In order to prove the almost sure convergence of $\hat{\cT}^M_{j,p}(X)$, we slightly modify the design of our regression trees. At every node, when computing the optimal splitting point $x^*$, we do not perform the optimization on the entire cell but we actually a margin to make sure that $y^*_l$ and $y^*_r$ are uniformly bounded w.r.t $x^*$. Consider a node at depth $p$, the optimal splitting is obtained by minimizing $m_M$ defined in~\eqref{eq:mM} where $\underline{a}$ and $\bar{a}$ were computed at depth $p-1$. As in~\eqref{eq:optim_mM}, we search for the optimal value of $x^*$ in $[\underline{a}^\delta - \varepsilon_p, \bar{a}^\delta - \varepsilon_p]$ where $\delta$ is the coordinate chosen for the splitting and $\varepsilon_p >0$ is a technical security padding decreasing to $0$. When there is no room left for this padding ($\abs{\underline{a}^\delta - \bar{a}^\delta} < 2 \varepsilon_p)$, we stop the splitting procedure.

From \citep{clement}, we have the following result
\begin{lemma}
\label{lemma4.6}
For every $j=1,\ldots, N-1$,
\begin{equation*}
    \abs*{F_j(a,Z,X) - F_j(b,Z,X)} \leq \left( \sum_{i=j}^N \abs*{Z_{t_i}}\right)\left( \sum_{i=j}^{N-1}\mathbb{1}_{\{\abs*{Z_{t_i}-\mathcal{P}_p^i(X_{t_i}, b_i)} \leq \abs*{\mathcal{P}_p^i(X_{t_i},a_i) - \mathcal{P}_p^i(X_{t_i},b_i)}\}}\right).
\end{equation*}
\end{lemma}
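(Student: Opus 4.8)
The plan is to argue by backward induction on $j$, from $j=N$ down to $j=1$, exploiting the recursive definition of the vector field $F$. Throughout I write $F_j(a)$ as shorthand for $F_j(a,Z,X)$ and set $\mathbb{1}_a^j = \mathbb{1}_{\{Z_{t_j} \geq \mathcal{T}_p^j(X_{t_j}, a_j)\}}$, with $\mathbb{1}_b^j$ defined analogously. The base case $j=N$ is immediate: $F_N(a) = Z_{t_N} = F_N(b)$, so the left-hand side vanishes, and the right-hand side also vanishes because the indicator sum $\sum_{i=N}^{N-1}$ is empty. This anchors the induction.

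For the inductive step I would first obtain an exact algebraic decomposition. From $F_j(a) = Z_{t_j}\mathbb{1}_a^j + F_{j+1}(a)(1-\mathbb{1}_a^j)$, subtracting the analogous identity for $b$ and adding and subtracting the cross term $F_{j+1}(b)(1-\mathbb{1}_a^j)$ gives
\[
F_j(a) - F_j(b) = (Z_{t_j} - F_{j+1}(b))(\mathbb{1}_a^j - \mathbb{1}_b^j) + (F_{j+1}(a) - F_{j+1}(b))(1 - \mathbb{1}_a^j).
\]
Taking absolute values and using $0 \leq 1 - \mathbb{1}_a^j \leq 1$ separates the estimate into a \emph{new} term carrying the indicator difference at level $j$ and a \emph{tail} term $\abs*{F_{j+1}(a) - F_{j+1}(b)}$ to which the induction hypothesis will apply.

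The crux is the indicator difference, where I would prove
\[
\abs*{\mathbb{1}_a^j - \mathbb{1}_b^j} \leq \mathbb{1}_{\{\abs*{Z_{t_j} - \mathcal{T}_p^j(X_{t_j}, b_j)} \leq \abs*{\mathcal{T}_p^j(X_{t_j}, a_j) - \mathcal{T}_p^j(X_{t_j}, b_j)}\}}.
\]
The two indicators differ exactly when $Z_{t_j}$ lies weakly between the thresholds $\mathcal{T}_p^j(X_{t_j}, a_j)$ and $\mathcal{T}_p^j(X_{t_j}, b_j)$; a short two-case check, according to which of the two thresholds is the larger, shows that on this event $Z_{t_j}$ is at least as close to $\mathcal{T}_p^j(X_{t_j}, b_j)$ as the two thresholds are to one another, which is precisely the stated indicator. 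For the coefficient I would invoke the uniform bound \eqref{equation8} on $F_{j+1}$ to get
\[
\abs*{Z_{t_j} - F_{j+1}(b)} \leq \abs*{Z_{t_j}} + \max_{k \geq j+1}\abs*{Z_{t_k}} \leq \sum_{i=j}^N \abs*{Z_{t_i}}.
\]

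Finally I would assemble the pieces. The new term is bounded by $\left(\sum_{i=j}^N \abs*{Z_{t_i}}\right)$ times the single indicator at level $j$, while the induction hypothesis bounds the tail term by $\left(\sum_{i=j+1}^N \abs*{Z_{t_i}}\right)\left(\sum_{i=j+1}^{N-1}\mathbb{1}_{\{\cdots\}}\right) \leq \left(\sum_{i=j}^N \abs*{Z_{t_i}}\right)\left(\sum_{i=j+1}^{N-1}\mathbb{1}_{\{\cdots\}}\right)$, where nonnegativity of the summands allows enlarging the outer sum. Adding the two, the level-$j$ indicator merges with the tail indicator sum to form $\sum_{i=j}^{N-1}\mathbb{1}_{\{\cdots\}}$, yielding exactly the claimed inequality. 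I expect the main obstacle to be the indicator comparison in the third paragraph, since it requires correctly identifying the between-the-thresholds event and verifying the distance inequality in both orderings; the decomposition and the telescoping of sums are otherwise routine bookkeeping.
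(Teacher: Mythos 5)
Your proof is correct. The paper does not actually prove this lemma---it imports it directly from \citet{clement}---and your backward induction (the exact decomposition $F_j(a)-F_j(b)=(Z_{t_j}-F_{j+1}(b))(\mathbb{1}_a^j-\mathbb{1}_b^j)+(F_{j+1}(a)-F_{j+1}(b))(1-\mathbb{1}_a^j)$, the two-case check that the indicators differ only when $Z_{t_j}$ lies between the two thresholds and is then within $\abs*{\mathcal{T}_p^j(X_{t_j},a_j)-\mathcal{T}_p^j(X_{t_j},b_j)}$ of $\mathcal{T}_p^j(X_{t_j},b_j)$, and the bound \eqref{equation8} on the coefficient) is precisely the standard argument given in that reference, so it is a faithful and complete reconstruction rather than a new route.
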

\begin{proposition}
\label{proposition4.7}
Assume that for all $p \in \mathbb{N}^*$, and all $1 \leq j \leq N - 1$, $\mathbb{P}(Z_{t_j} = \mathcal{P}_p^j(X_{t_j}, \theta_j^p)) = 0$. For all $1 \le j \le N-1$ and all $1 \le i \le 2^p$, the optimization problems
\begin{equation*}
    \inf_{y_l, y_r, x }\E\left[ \left(Z_{\tau^p_j} - y_l \ind{X_{t_j} \in \cR_{a^{i-1}_p, a^i_p}^{\delta,l}(x)} - y_r \ind{X_{t_j} \in \cR_{a^{i-1}_p, a^i_p}^{\delta,r}(x)}\right)^2\right]
\end{equation*}
with $x \in [a^{i-1, \delta}_p + \varepsilon_p, a^{i, \delta}_p - \varepsilon_p]$ admit a unique solution where $\delta$ is the coordinate selected by the splitting strategy.

Then, for all $j = 1, \ldots, N-1$, $\hat{\theta}_j^{p,M}$ converges a.s. to $\theta_j^p$ and $\mathcal{P}_p^{j}(X_{t_j}, \hat{\theta}_j^{p,M})$ converges a.s. to $\mathcal{P}_p^j(X_{t_j}, \theta_j^p)$ as $M \to \infty$.
\end{proposition}
\begin{proof}
We proceed by backward induction on $j$ with a nested forward induction on $p$.

{\bf \noindent Step 1: $j=N-1$}
\begin{itemize}
    \item For $p=1$, conditionally on splitting at the best point (and not at the midpoint)   the new nodes are obtained by solving 
    \begin{align*}
        {\mathcal{P}}_1^{N-1}(X_{t_{N-1}}, \hat{\theta}_{N-1}^{1,M}) = &\inf\limits_{\alpha,\beta,a} \frac{1}{M}\sum_{m=1}^M\abs*{Z_{t_N}^{(m)} - \alpha\mathbb{1}_{\{X_{t_{N-1}}^{(m)}\in \cR_{0,1}^{\delta,l}(a)\}} - \beta \mathbb{1}_{\{X_{t_{N-1}}^{(m)} \in [a, 1]\}}}^2\\
        \mathcal{P}_1^{N-1}(X_{t_{N-1}}, \theta_{N-1}^{1}) = &\inf_{\alpha,\beta,a} \mathbb{E}\left[ \abs*{Z_{t_N} - \alpha\mathbb{1}_{\{X_{t_{N-1}}\in \cR_{0,1}^{\delta,l}(a)\}} - \beta \mathbb{1}_{\{X_{t_{N-1}} \in [a, 1]\}}}^2 \right]
    \end{align*}
    where $a \in [\varepsilon_1, 1 - \varepsilon_1]$.  We can apply Lemma~\ref{lem:sargmax} to obtain ${\mathcal{P}}_1^{N-1}(X_{t_{N-1}}, \hat{\theta}_{N-1}^{1,M})$ converges to $\mathcal{P}_1^{N-1}(X_{t_{N-1}}, \theta_{N-1}^1)$ a.s as $M \to \infty$ and so does $\hat{\theta}_{N-1}^{1,M}$ converge a.s. to $\theta_{N-1}^1$. If the split is at the midpoint, the conclusion is even easier to obtain as the infimums are only computed w.r.t $\alpha$ and $\beta$ but not $a$. The same situation will occur repeatedly and for the sake of clearness, we will only treat the case of splitting at the best point which is harder to handle.

    \item  Assume that ${\mathcal{P}}_p^{N-1}(X_{t_{N-1}}, \hat{\theta}_{N-1}^{p,M})$ converges to $\mathcal{P}_p^{N-1}(X_{t_{N-1}}, \theta_{N-1}^p)$ a.s and that $\hat{\theta}_{N-1}^{p,M}$ converges a.s. to $\theta_{N-1}^p$ as $M \to \infty$ for $p\ge 1$, we will prove it for $p+1$.  For $i\in \{1,\ldots, 2^p\}$, we consider the computation of the $i-th$ node in $\hat{\cT}_{p+1}^{j,M}$ at depth $p+1$. 
    \begin{align*}
        \hat{\nu}_{p,N-1}^M(\alpha, \beta, a) &= \frac{1}{M}\sum_{m=1}^M \abs*{Z_{t_N}^{(m)} - \alpha \mathbb{1}_{\{X_{t_{N-1}}^{(m)} \in \cR_{a^{p,M}_{i-1,N-1},a^{p,M}_{i,N-1}}^{\delta,l}(a)\}} - \beta \mathbb{1}_{\{X_{t_{N-1}}^{(m)} \in \cR_{a^{p,M}_{i-1,N-1},a^{p,M}_{i,N-1}}^{\delta,r}(a)\}}}^2.
    \end{align*}
    The parameters $\hat{\theta}_{N-1}^{p+1,M}$ are obtained by computing all the nodes.  We also introduce a modified version of $\hat{\nu}_{p,N-1}^M$ in which we use the splits computed in $\cT_p^j$. 
    \begin{align*}
        \nu^M_{p,N-1}(\alpha, \beta, a)&= \frac{1}{M}\sum_{m=1}^M \abs*{Z_{t_N}^{(m)} - \alpha \mathbb{1}_{\{X_{t_{N-1}}^{(m)} \in \cR_{a_{i-1,N-1}^p,a^{p}_{i-1,N-1}}^{\delta,l}(a)\}} - \beta \mathbb{1}_{\{X_{t_{N-1}}^{(m)} \in \cR_{a^{p}_{i-1,N-1},a^{p}_{i-1,N-1}}^{\delta,r}(a)\}}}^2.
    \end{align*}
    The random functions $\nu^M_{p, N-1}$ write as standard empirical means and we will prove they are close to $\hat{\nu}_{p,N-1}^M$ for $M$ large.

    Using Lemma~\ref{lem:uslln} along with the arguments of the proof of Lemma~\ref{lem:sargmax}, it is easy to see that the random function $\alpha,\beta, a \mapsto \nu^M_{p,N-1}(\alpha, \beta, a)$ converges a.s locally uniformly to the function $\alpha, \beta, a \mapsto \mathbb{E}\left[ \abs*{Z_{t_N} - \alpha \mathbb{1}_{\{X_{t_{N-1}} \in \cR_{a_{i-1,N-1}^p,a^p_{i,N-1}}^{\delta,l}(a)\}} - \beta \mathbb{1}_{\{X_{t_{N-1}} \in \cR_{a_{i-1,N-1}^p,a^p_{i,N-1}}^{\delta,r}(a)\}}}^2\right]$. 
    \begin{align*}
        &\sup\limits_{a \in [0,1]^d, \abs*{\alpha}\leq C, \abs*{\beta} \leq C}\abs*{\hat{\nu}^M_{p,N-1}(\alpha, \beta, a) - \nu^M_{p,N-1}(\alpha, \beta, a)}\\
        &\leq \frac{1}{M}\sum_{m=1}^M \left[\abs*{2Z_{t_N}^{(m)}} + 4C \right] C \left( \mathbb{1}_{\{X_{t_{N-1}}^{(m), \delta} \in (a^{p,M,\delta}_{i-1}, a^{p,\delta}_{i-1})\}} +  \mathbb{1}_{\{X_{t_{N-1}}^{(m), \delta } \in (a^{p,M, \delta}_{i}, a^{p,\delta}_{i})\}}\right).\\
    \end{align*}
    Using the induction assumption on $p$, $a^{p,M}_i$ (resp. $a^{p,M}_{i-1}$) converges a.s. to $a_i^p$ (resp. $a_{i-1}^p$). Let $\epsilon > 0$,
    \begin{align*}
        &\limsup_M  \abs*{\hat{\nu}^M_{p,N-1}(\alpha, \beta, a) - \nu^M_{p,N-1}(\alpha, \beta, a)}\\
        &\leq \limsup_M \frac{1}{M} \sum_{m=1}^M \left[\abs*{2Z_{t_N}^{(m)}} + 4C \right] C \left( \abs*{\mathbb{1}_{\{\abs*{X_{t_{N-1}}^{(m),\delta} - a^{p,\delta}_{i}} \leq \epsilon\}}} + \abs*{\mathbb{1}_{\{\abs*{X_{t_{N-1}}^{(m),\delta} - a^{p,\delta}_{i+1}} \leq \epsilon\}}}\right)\\
        &\leq C (4C + 2\mathbb{E}\left[\abs*{2Z_{t_N}}\right]) \left(\mathbb{P}(\abs*{X_{t_{N-1}}^\delta - a^{p,\delta}_{ia}} \leq \epsilon) + \mathbb{P}(\abs*{X_{t_{N-1}}^\delta - a^{p,\delta}_{i+1}})\leq \epsilon\right).
    \end{align*}
    Since $X_{t_{N-1}}$ has a density, $\lim_{\epsilon \to 0}\mathbb{P}(\abs*{X_{t_{N-1}}^\delta - a^{p,\delta}_{i}} \leq \epsilon) = \mathbb{P}(X_{t_{N-1}}^\delta= a_p^{i,\delta}) = 0$ and $\lim_{\epsilon \to 0}\mathbb{P}(\abs*{X_{t_{N-1}}^\delta - a^{p,\delta}_{i+1}} \leq \epsilon) = \mathbb{P}(X_{t_{N-1}}^\delta= a^{p,\delta}_{i+1}) = 0$. As a result, $\abs*{\hat{\nu}^M_{p,N-1} - \nu_{p,N-1}^M} \to 0$ a.s. locally uniformly when $M \to \infty$. Thus, the random function $\hat{\nu}_{p,N-1}^M$ converges a.s. locally uniformly to the function
    \[\alpha, \beta, a \mapsto \mathbb{E}\left[ \abs*{Z_{t_N} - \alpha \mathbb{1}_{\{X_{t_{N-1}} \in \cR_{a^p_{i-1},a^p_{i}}^{\delta,l}(a)\}} - \beta \mathbb{1}_{\{X_{t_{N-1}} \in \cR_{a^p_{i-1},a^p_{i}}^{\delta,r}(a)\}}}^2\right].\]
    Now, we apply Lemma~\ref{lem:cv-stochoptim} along with the same arguments of Lemma~\ref{lem:sargmax} to conclude that ${\mathcal{P}}_{p+1}^{N-1}(X_{t_{N-1}}, \hat{\theta}_{N-1}^{p+1,M})$ converges to $\mathcal{P}_{p+1}^{N-1}(X_{t_{N-1}}, \theta_{N-1}^{p+1})$ a.s as $M \to \infty$ and that $\hat{\theta}_{N-1}^{p+1,M}$ converges a.s. to $\theta_{N-1}^{p+1}$.
\end{itemize}

{\bf \noindent Step 2: $j <N-1$.} \\
So far, we have proved that for all $p$, ${\mathcal{P}}_p^{N-1}(X_{t_{N-1}}, \hat{\theta}_{N-1}^{p,M})$ converges to $\mathcal{P}_p^{N-1}(X_{t_{N-1}}, \theta_j^p)$ a.s and that $\hat{\theta}_{N-1}^{p,M}$ converges a.s. to $\theta_j^p$ as $M \to \infty$. Now, suppose that ${\mathcal{P}}_p^k(X_{t_k}, \hat{\theta}_k^{p,M})$ (resp. $\hat{\theta}_k^{p,M}$) converges to $\mathcal{P}_p^k(X_{t_k}, \theta_k^p)$ (resp. $\theta_k^p$) a.s as $M \to \infty$ for all $p$ and for $k=N-1, \ldots, j+1$.We should prove that these convergence results hold for $j$ and this will be done by induction on $p$. Now that we have understood that considering multidimensional random variables $X_{t_j}$ does not play any role in the proof, we will make the rest of the proof as if $X$ were having in $\R$ in order to use a little lighter notation
\begin{itemize}
    \item For $p = 1$, consider
    \begin{align*}
        \hat{\nu}_{1,j}^M( \alpha, \beta, a) &= \frac{1}{M} \sum\limits_{m=1}^M \abs*{F_{j+1}\left(\hat{\vartheta}^{1,M}, Z^{(m)}, X^{(m)}\right) - \alpha \mathbb{1}_{\{X_{t_j}^{(m)} \in [0, a)\}} - \beta \mathbb{1}_{\{X_{t_j}^{(m)} \in [a, 1]} \}}^2\\
        \nu^M_{1,j}(\alpha, \beta, a) &= \frac{1}{M}\sum\limits_{m=1}^M \abs*{F_{j+1}\left(\vartheta^1, Z^{(m)}, X^{(m)}\right) - \alpha \mathbb{1}_{\{X_{t_j}^{(m)} \in [0, a)\}} - \beta \mathbb{1}_{\{X_{t_j}^{(m)} \in [a, 1]\}} }^2.\\
    \end{align*}
    The function $\nu_{1,j}^M$ writes as the sum of i.i.d random variables. Let $C \geq 0$, using Equation~\eqref{equation8}
    \begin{align*}
        &\mathbb{E}\left[\sup\limits_{a \in [0,1]^d, \abs*{\alpha} \leq C, \abs*{\beta} \leq C} \abs*{F_{j+1}\left(\vartheta^1, Z, X\right) - \alpha \mathbb{1}_{\{X_{t_j} \in [0, a)\}} - \beta \mathbb{1}_{\{X_{t_j} \in [a, 1]} \}}^2 \right]\\
        &\leq 2\mathbb{E}\left[ \abs*{F_{j+1}\left(\vartheta^1, Z, X\right)}^2\right] + 2 \mathbb{E}\left[\sup\limits_{a \in [0,1]^d, \abs*{\alpha} \leq C, \abs*{\beta} \leq C}\abs*{ \alpha \mathbb{1}_{\{X_{t_j} \in [0, a)\}} + \beta \mathbb{1}_{\{X_{t_j} \in [a, 1]\}} }^2 \right]\\
        & \leq 2 \mathbb{E}\left[ \max_{l\geq j + 1} (Z_{t_l})^2 \right] + 2C^2 < \infty.
    \end{align*}
    Using Lemma \ref{lem:uslln}, $\alpha, \beta, a \mapsto \nu_{1,j}^M(\alpha, \beta, a)$ converges a.s locally uniformly to the function $\alpha, \beta, a \mapsto \mathbb{E}\left[ \abs*{F_{j+1}(\vartheta^1, Z, X) - \alpha \mathbb{1}_{\{X_{t_j} \in [0, a)\}} - \beta \mathbb{1}_{\{X_{t_j} \in [a, 1]\}}}^2 \right]$.
    It remains to prove that $\forall C>0 \; \sup\limits_{a \in [0, 1]^d, \abs*{\alpha} \leq C, \abs*{\beta}\leq C} \abs*{\hat{\nu}_{1,j}^M(a, \alpha, \beta) - \nu_{1,j}^M(a, \alpha, \beta)} \to 0$ a.s when $M \to \infty$. \\
    Then, using Equation \eqref{equation8} and Lemma \ref{lemma4.6}
    \begin{align*}
        &\sup\limits_{a \in [0, 1]^d, \abs*{\alpha} \leq C, \abs*{\beta}\leq C}\abs*{\hat{\nu}_{1,j}^M(a, \alpha, \beta) - \nu_{1,j}^M(a, \alpha, \beta)}\\
        & \leq \sup\limits_{a \in [0, 1]^d, \abs*{\alpha} \leq C, \abs*{\beta}\leq C}\frac{1}{M}\sum_{m=1}^M \abs*{F_{j+1}\left(\hat{\vartheta}^{1,M}, Z^{(m)}, X^{(m)}\right) - F_{j+1}\left(\vartheta^1, Z^{(m)}, X^{(m)}\right)}\\
        & \abs*{F_{j+1}\left(\hat{\vartheta}^{1,M}, Z^{(m)}, X^{(m)}\right) + F_{j+1}\left(\vartheta^1, Z^{(m)}, X^{(m)}\right) -2 \alpha \mathbb{1}_{\{X_{t_j}^{(m)} \in [0, a)\}} - 2 \beta \mathbb{1}_{\{X_{t_j}^{(m)} \in [a, 1]\}}}\\
        &\leq \sum_{m=1}^M 2 \left(\max_{l\geq j+1} \abs*{Z_{t_l}^{(m)}} + 2C\right) \abs*{F_{j+1}\left(\hat{\vartheta}^{1,M}, Z^{(m)}, X^{(m)}\right) - F_{j+1}\left(\vartheta^1, Z^{(m)}, X^{(m)}\right)}\\
        &\leq \frac{1}{M}\sum_{m=1}^M 2 \left(\max_{l\geq j+1} \abs*{Z_{t_l}^{(m)}} + 2C\right) \left( \sum_{i=j+1}^{N} \abs*{Z_{t_i}^{(m)}} \sum_{i=j+1}^{N-1} \mathbb{1}_{\{\abs*{Z_{t_i}^{(m)} - \mathcal{T}_1^i(X_{t_i}^{(m)})} \leq \abs*{\hat{\mathcal{T}}_1^{i,M}(X_{t_i}^{(m)}) - \mathcal{T}_1^i(X_{t_i}^{(m)})}\}}\right).\\
    \end{align*}
    Using the induction assumption on $j$, $\hat{\mathcal{T}}_1^{i,M}(X_{t_i}^{(m)})$ converges a.s. to $\mathcal{T}_1^i(X_{t_i}^{(m)})$ for all $N-1 \ge i \ge j+1$. Let $\epsilon<0$.
    \begin{align*}
        &\limsup_M \sup\limits_{a \in [0, 1]^d, \abs*{\alpha} \leq C, \abs*{\beta}\leq C} \abs*{\hat{\nu}_{1,j}^M(a, \alpha, \beta) - \nu_{1,j}^M(a, \alpha, \beta)} \\
        &\leq \frac{1}{M}\sum_{m=1}^M 2 \left(\max_{l\geq j+1} \abs*{Z_{t_l}^{(m)}} + 2C\right) \left( \sum_{i=j+1}^{N} \abs*{Z_{t_i}^{(m)}} \sum_{i=j+1}^{N-1} \mathbb{1}_{\{\abs*{Z_{t_i}^{(m)} - \mathcal{T}_1^i(X_{t_i}^{(m)})} \leq \epsilon\}}\right).
    \end{align*}
    Since $\mathbb{P}(Z_{t_j}^{(m)} = {\mathcal{T}}_p^j(X_{t_j}^{(m)})) = 0 $, then $\lim_{\epsilon \to 0} \mathbb{1}_{\{\abs*{Z_{t_i}^{m)} - \mathcal{T}_1^i(X_{t_i}^{(m)})} \leq \epsilon\}} = 0$ a.s and we conclude that a.s. $\abs*{\hat{\nu}_{1,j}^M(a, \alpha, \beta) - \nu_{1,j}^M(a, \alpha, \beta)}$ converges to zero uniformly. Thus, $\hat{\nu}_{1,j}^M$ converges a.s uniformly to the function $a, \alpha, \beta \mapsto \mathbb{E}\left[ \abs*{F_{j+1}(\vartheta^1, Z, X) - \alpha \mathbb{1}_{\{X_{t_j} \in [0, a)\}} - \beta \mathbb{1}_{\{X_{t_j} \in [a, 1]\}}}^2 \right]$. Then, we apply Lemma~\ref{lem:uslln} and the arguments of the proof of Lemma~\ref{lem:sargmax} to deduce that ${\mathcal{P}}_1^{j}(X_{t_{j}}, \hat{\theta}_{j}^{1,M})$ converges to $\mathcal{P}_j^{j}(X_{t_{j}}, \theta_j^j)$ a.s and that $\hat{\theta}_{j}^{j,M}$ converges a.s. to $\theta_j^p$ as $M \to \infty$.
    \item We assume that ${\mathcal{P}}_p^k(X_{t_k}, \hat{\theta}_k^{p,M})$ (resp. $\hat{\theta}_k^{p,M}$) converges to $\mathcal{P}_p^k(X_{t_k}, \theta_k^p)$ (resp. $\theta_k^p$) a.s as $M \to \infty$ for some $p$ and for $k=N-1, \ldots, j$. We will probe that it holds for $p+1$.
    
    Let $i\in \{1, \ldots, 2^p\}$ and consider  
    \begin{align*}
        \hat{\nu}_{p,j}^M(\alpha, \beta, a) &= \frac{1}{M}\sum_{m=1}^M \abs*{F_{j+1}(\hat{\vartheta}^{p,M}, Z^{(m)}, X^{(m)}) - \alpha \mathbb{1}_{\{X_{t_j}^{(m)} \in [a^{p,M}_{i - 1,j}, a)\}} - \beta \mathbb{1}_{\{X_{t_j}^{(m)} \in [a, a^{p,M}_{i, j})\}}}^2.\\
        \nu_{p,j}^M(\alpha, \beta, a)&= \frac{1}{M}\sum_{m=1}^M \abs*{ F_{j+1}(\vartheta^p, Z^{(m)}, X^{(m)})- \alpha \mathbb{1}_{\{X_{t_j}^{(m)} \in [a^p_{i-1,j}, a)\}} - \beta \mathbb{1}_{\{X_{t_j}^{(m)} \in [a, a^p_{i,j})\}}}^2
    \end{align*}
    The function $\nu_{p,j}^M$ writes as the sum of i.i.d random variables. Let $C \geq 0$, 
    \begin{align*}
        &\mathbb{E}\left[\sup\limits_{a \in [0,1]^d, \abs*{\alpha} \leq C, \abs*{\beta} \leq C}\abs*{F_{j+1}\left(\vartheta^p, Z^{(m)}, X^{(m)}\right) - \alpha \mathbb{1}_{\{X_{t_j}^{(m)} \in [a^p_{i-1,j}, a)\}} - \beta \mathbb{1}_{\{X_{t_j}^{(m)} \in [a, a^p_{i,j})\}} }^2 \right]\\
        &\leq 2\mathbb{E}\left[ \abs*{F_{j+1}\left(\vartheta^p, Z^{(m)}, X^{(m)}\right)}^2\right] + 2 C^2\\
        & \leq 2 \mathbb{E}\left[ \max_{l\geq j + 1} (Z_{t_l})^2 \right] + 2C^2 < \infty.
    \end{align*}
    We conclude using Lemma~\ref{lem:uslln} that a.s $ \nu_{p,j}^M$ converges locally uniformly to the function \[\alpha, \beta, a \mapsto \mathbb{E}\left[ \abs*{F_{j+1}(\vartheta^p, Z, X) - \alpha \mathbb{1}_{\{X_{t_j} \in [a^p_{i-1, j}, a)\}} - \beta \mathbb{1}_{\{X_{t_j} \in (a, a^p_{i,j})\}}}^2 \right].\] 
    Let $C > 0$, 
    \begin{align*}
        &\sup\limits_{a \in [0, 1]^d, \abs*{\alpha} \leq C, \abs*{\beta}\leq C} \abs*{ \hat{\nu}_{p,j}^M(\alpha, \beta, a) -  \nu_{p,j}^M(\alpha, \beta, a)}\\
        &\leq \sup\limits_{a \in [0, 1]^d, \abs*{\alpha} \leq C, \abs*{\beta}\leq C} \frac{1}{M}\sum_{m=1}^M \left(2\max_{l\geq j+1} \abs*{Z_{t_l}^{(m)}} + 4C\right) \\
        &\left[ \left( \left(\sum_{i=j+1}^{N} \abs*{Z_{t_i}^{(m)}}\right) \sum_{i=j+1}^{N-1} \mathbb{1}_{\{\abs*{Z_{t_i} - \mathcal{T}_p^i(X_{t_i}^{(m)})} \leq \abs*{\hat{\mathcal{T}}_p^{i, M}(X_{t_i}^{(m)}) - \mathcal{T}_p^i(X_{t_i}^{(m)})}\}}\right) \right.\\
        & \left.+ \abs*{ \alpha \mathbb{1}_{\{X_{t_j}^{(m)} \in [a^p_{i-1,j}, a^{p,M}_{i-1,j})\}} + \beta \mathbb{1}_{\{X_{t_j}^{(m)} \in [a^{p,M}_{i, j}, a^p_{i,j}]\}} }\right].
    \end{align*}
    Using the induction assumption on $p$, $\lim_{M \to \infty} a^{p,M}_{i-1,j} = a^p_{i-1,j}$ and $\lim_{M \to \infty} a_p^{i,M} = a_p^i$ a.s and using the induction assumption on $j$, $\lim_{M \to \infty} \hat{\mathcal{T}}_p^{i, M}(X_{t_i}^{(m)}) = \mathcal{T}_p^i(X_{t_i}^{(m)}) \; \forall i \ge j+1$. Let $\epsilon <0$, 
    \begin{align*}
        & \limsup_M \sup\limits_{a \in [0, 1]^d, \abs*{\alpha} \leq C, \abs*{\beta}\leq C} \abs*{ \hat{\nu}_{p,j}^M(\alpha, \beta, a) -  \nu_{p,j}^M(\alpha, \beta, a)}\\ 
        &\leq \limsup_M  \frac{1}{M}\sum_{m=1}^M  \left(2\max_{l\geq j+1} \abs*{Z_{t_l}^{(m)}} + 4C\right)\\
        &\left[ C \left( \mathbb{1}_{\{\abs*{X_{t_j}^{(m)} - a^p_{i-1,j}} \leq \epsilon\}} + \beta \mathbb{1}_{\{\abs*{X_{t_j}^{(m)} - a^p_{i,j}} \leq  \epsilon\}}\right) +\left( \left(\sum_{k=j+1}^{N} \abs*{Z_{t_k}^{(m)}} \right)\sum_{k=j+1}^{N-1} \mathbb{1}_{\{\abs*{Z_{t_k}^{(m)} - \mathcal{T}_p^k(X_{t_k}^{(m)})} \leq \epsilon\}}\right)\right].
    \end{align*}
Since $\lim_{\epsilon \to 0 } \mathbb{1}_{\{\abs*{Z_{t_i} - \mathcal{T}_p^i(X_{t_i}^{(m)})} \leq \epsilon\}} = \lim_{\epsilon \to 0} \abs*{ \alpha \mathbb{1}_{\{\abs*{X_{t_j}^{(m)} - a^p_{i-1,j}} \leq \epsilon\}} + \beta \mathbb{1}_{\{\abs*{X_{t_j}^{(m)} - a^p_{i,j}} \leq  \epsilon\}} } = 0$, we conclude that a.s $\abs*{ \hat{\nu}_{p,j}^M(\alpha, \beta, a) -  \nu_{p,j}^M(\alpha, \beta, a)} \to 0$ locally uniformly when $M \to \infty$, and thus the random function $\hat{\nu}_{p,j}^M$ converges a.s locally uniformly to the function $\alpha, \beta, a \mapsto \mathbb{E}\left[ \abs*{F_{j+1}(\vartheta^p, Z, X) - \alpha \mathbb{1}_{\{X_{t_j} \in [a_p^{i-1}, a)\}} - \beta \mathbb{1}_{\{X_{t_j} \in (a, a_p^i)\}}}^2 \right]$. Then we apply Lemma~\ref{lem:cv-stochoptim} along with the arguments of the proof of Lemma~\ref{lem:sargmax} to conclude that ${\mathcal{P}}_{p+1}^k(X_{t_k}, \hat{\theta}_k^{p+1,M})$ (resp. $\hat{\theta}_k^{p+1,M}$) converges to $\mathcal{P}_{p+1}^k(X_{t_k}, \theta_k^p)$ (resp. $\theta_k^{p+1}$) a.s as $M \to \infty$ for $k=N-1, \ldots, j$. This completes the induction.
\end{itemize}
\end{proof}
\begin{theorem}
\label{thm:slln}
Assume that for all $p \in \mathbb{N}^*$, and all $1 \leq j \leq N - 1$, $\mathbb{P}(Z_{t_j} = \mathcal{T}_p^j(X_{t_j})) = 0$. Then, for $\alpha=1,2$ and for every $j=1,\ldots, N$,
\begin{equation*}
    \lim\limits_{M\to\infty} \frac{1}{M}\sum\limits_{i=1}^M \left( Z_{\tau_j^{p,(m)}}^{(m)}\right)^{\alpha} = \mathbb{E}\left[(Z_{\tau_j^p})^{\alpha}\right] \textnormal{a.s}.
\end{equation*}
\end{theorem}
\begin{proof}
Note that $\mathbb{E}\left[ (Z_{\tau_j^p})^{\alpha}\right] = \mathbb{E}\left[ F_j(\vartheta^p,Z, X)^{\alpha} )\right]$ and by the strong law of large numbers \begin{equation*}
    \lim\limits_{M \to \infty} \frac{1}{M}\sum_{m=1}^M F_j\left((\vartheta^p, Z^{(m)}, X^{(m)})^{\alpha}\right) = \mathbb{E}\left[ F_j(\vartheta^p,Z, X)^{\alpha} \right] \textnormal{a.s.}
\end{equation*}
It remains to prove that
\begin{equation*}
    \Delta F_M = \frac{1}{M}\sum\limits_{m=1}^M F_j(\hat{\vartheta}^{p,M}, Z^{(m)}, X^{(m)})^{\alpha} - F_j(\vartheta^p, Z^{(m)}, X^{(m)})^{\alpha} \xrightarrow[M \to \infty]{a.s} 0.
\end{equation*}
For any $x,y \in \mathbb{R}$, and $\alpha=1,2$, $\abs*{x^{\alpha} - y^{\alpha}} \leq \abs*{x-y} \abs*{x^{\alpha - 1} + y^{\alpha - 1}}$. Using Lemma \ref{lemma4.6} and Equation \eqref{equation8}, we have
\begin{align*}
    \abs*{\Delta F_M} &\leq \frac{1}{M} \sum\limits_{m=1}^M \abs*{F_j(\hat{\vartheta}^{p,M}, Z^{(m)}, X^{(m)})^{\alpha} - F_j(\vartheta^p, Z^{(m)}, X^{(m)})^{\alpha}}\\
    & \leq 2 \frac{1}{M} \sum\limits_{m=1}^M  \sum_{i=j}^{N} \max\limits_{k \geq j} \abs*{Z_{t_k}^{(m)}}^{\alpha - 1}\abs*{Z_{t_i}^{(m)}} \sum_{i=j}^{N-1} \mathbb{1}_{\{\abs*{Z_{t_i}^{(m)} - \mathcal{T}_p^i(X_{t_i}^{(m)})} \leq \abs*{\hat{\mathcal{T}}_p^{i, M}(X_{t_i}^{(m)}) - \mathcal{T}_p^i(X_{t_i}^{(m)})}\}}.
\end{align*}
Using Proposition \ref{proposition4.7}, for all $i=j, \ldots,N-1$, $\abs*{\hat{\mathcal{T}}_p^{i,M}(X_{t_i}) - \mathcal{T}_p^i(X_{t_i})} \to 0$ a.s when $M \to \infty$. Then for any $\epsilon > 0$,
\begin{align*}
    &\limsup_M \abs*{\Delta F_M}\\
    & \leq 2 \limsup_M \frac{1}{M}\sum\limits_{m=1}^M\sum\limits_{i=j}^N \max\limits_{k \geq j} \abs*{Z_{t_k}^{(m)}}^{\alpha - 1}\abs*{Z_{t_i}^{(m)}} \sum_{i=j}^{N-1} \mathbb{1}_{\{\abs*{Z_{t_i}^{(m)} - \mathcal{T}_p^i(X_{t_i}^{(m)})} \leq \epsilon \}}\\
    &\leq 2 \mathbb{E}\left[ \sum\limits_{i=j}^N \max\limits_{k \geq j} \abs*{Z_{t_k}}^{\alpha - 1}\abs*{Z_{t_i}} \sum_{i=j}^{N-1} \mathbb{1}_{\{\abs*{Z_{t_i} - \mathcal{T}_p^i(X_{t_i})} \leq \epsilon \}}\right].
\end{align*}
We conclude that $\limsup_M \abs*{\Delta F_M} = 0$ by letting $\epsilon$ go to 0 which ends the proof. 
\end{proof}
Proving the global convergence of our algorithm amounts to studying the difference
\begin{align*}
    \frac{1}{M}\sum_{i=1}^M  Z_{\tau_j^{p,(m)}}^{(m)} - \mathbb{E}\left[(Z_{\tau_j})\right] = \left(\frac{1}{M}\sum_{i=1}^M Z_{\tau_j^{p,(m)}}^{(m)} - \mathbb{E}\left[(Z_{\tau_j^p})\right]\right) + \left(\mathbb{E}\left[Z_{\tau_j^p}\right] - \mathbb{E}\left[(Z_{\tau_j^p})\right]\right)
\end{align*}
By Proposition~\ref{prop:pricecv-p}, the $\mathbb{E}\left[(Z_{\tau_j^p})\right] - \mathbb{E}\left[Z_{\tau_j^p}\right]$ goes to zero when $p \to \infty$ independently of $M$. Even if Theorem~\ref{thm:slln} proves the a.s. convergence of the first term for any fixed $p$ when $M \to \infty$, its behaviour seems less clear when both $M$ and $p$ go to infinity. Note that the same difficult already occurred in~\cite{clement} for the more standard least square approach. The main difficulty stems for the inability to compute the variance of the limiting distribution appearing in the central limit theorem governing the convergence of $\frac{1}{M}\sum_{i=1}^M \left( Z_{\tau_j^{p,(m)}}^{(m)}\right)$ even in the least square framework. In particular, we do seem to control the behaviour of the limiting variance with respect to $p$.

\section{Random forests}
Regression trees are barely used as standalone estimators but are often aggregated to obtain a random forest. Considering our training sample $\cD_M = \{ (X_i, Y_i)_{1 \le i \le M}\}$ of size $M$. Let $\Theta$ be a random variable independent of $\cD_M$ and used to resample the training set without replacement. Note that the size of the resampled set is usually smaller that $M$. Let $\Theta_1, \dots, \Theta_B$ $B$ iid samples of $\Theta$. For every $k=1,\dots,B$, $\hat \cT_p^M(X, \Theta_k)$ denotes the regression tree of depth $p$ computed on the resampled set obtained from $\cD_M$ by using $\Theta_k$. Then, the random forest estimator is defined by 
$$\mathcal{H}_{B,p}(X) = \sum_{k=1}^B \frac{1}{B}\hat\cT^M_{p,\Theta_k}(X).$$
We also introduce $\mathcal{H}_p(X) = \mathbb{E}_{\Theta}\left[ \hat\cT^M_{p,\Theta}(X)\right] = \lim_{B \to \infty} \mathcal{H}_{B,p}(X)$.

\begin{theorem}
\label{Theorem5.2}
\begin{equation*}
    \lim_{B \to \infty} \mathbb{E}\left[ \abs*{Y - \mathcal{H}_{B,p}(X)}^2\right] = \mathbb{E}\left[ \abs*{Y - \mathcal{H}_p(X)}^2\right] 
\end{equation*}
See Theorem 11.1 in \citep{Breiman}.
\end{theorem}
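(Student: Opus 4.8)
The plan is to reduce the statement to an interchange of limit and expectation, the pointwise convergence $\mathcal{H}_{B,p}(X)\to\mathcal{H}_p(X)$ being built into the definition of $\mathcal{H}_p$. The one structural fact I would import from earlier is the single–tree second–moment bound established inside the proof of Theorem \ref{theorem4.2}: applied leaf by leaf it gives $\mathbb{E}[\mathcal{T}_{p,\Theta}(X)^2]\leq\mathbb{E}[Y^2]<\infty$ for each realisation of $\Theta$, and by convexity of $t\mapsto t^2$ we get $\mathcal{H}_{B,p}(X)^2\leq\frac1B\sum_{k=1}^B\mathcal{T}_{p,\Theta_k}(X)^2$, hence $\sup_B\mathbb{E}[\mathcal{H}_{B,p}(X)^2]\leq\mathbb{E}[Y^2]$. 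I would then simply expand $\abs*{Y-\mathcal{H}_{B,p}(X)}^2=Y^2-2Y\mathcal{H}_{B,p}(X)+\mathcal{H}_{B,p}(X)^2$ and pass to the limit term by term.

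For the cross term, the randomisation $(\Theta_k)_k$ being i.i.d. and independent of the data, identical distribution together with Fubini (licensed by $\mathbb{E}[\abs*{Y\mathcal{T}_{p,\Theta}(X)}]\leq\mathbb{E}[Y^2]$ through Cauchy–Schwarz) gives $\mathbb{E}[Y\mathcal{H}_{B,p}(X)]=\mathbb{E}[Y\mathcal{T}_{p,\Theta}(X)]=\mathbb{E}[Y\mathcal{H}_p(X)]$ for every $B$, so this term is already constant. For the quadratic term, conditioning on $X$ and using that distinct $\Theta_k,\Theta_l$ are conditionally independent given $X$ with $\mathbb{E}[\mathcal{T}_{p,\Theta_k}(X)\mid X]=\mathcal{H}_p(X)$, I obtain $\mathbb{E}[\mathcal{H}_{B,p}(X)^2]=\frac1B\mathbb{E}[\mathcal{T}_{p,\Theta}(X)^2]+\frac{B-1}{B}\mathbb{E}[\mathcal{H}_p(X)^2]\to\mathbb{E}[\mathcal{H}_p(X)^2]$. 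Adding the three pieces yields the claim. This route proves the equality for the joint expectation over the data and the forest randomness without any delicate analysis.

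Closer to Breiman's own argument, one keeps the pointwise picture: for fixed $X=x$ the strong law gives $\mathcal{H}_{B,p}(x)\to\mathcal{H}_p(x)$ a.s., a Fubini argument upgrades this to convergence for almost every $x$ along almost every sequence $(\Theta_k)_k$, and then one passes to the limit under $\mathbb{E}_{X,Y}$ by a generalised dominated convergence theorem: with $g_B:=2Y^2+\frac2B\sum_{k=1}^B\mathcal{T}_{p,\Theta_k}(X)^2$ one has $\abs*{Y-\mathcal{H}_{B,p}(X)}^2\leq g_B$, where $g_B$ converges a.s. (again by the strong law) to $g:=2Y^2+2\mathbb{E}_\Theta[\mathcal{T}_{p,\Theta}(X)^2]$ while $\mathbb{E}[g_B]=2\mathbb{E}[Y^2]+2\mathbb{E}[\mathcal{T}_{p,\Theta}(X)^2]=\mathbb{E}[g]$ stays constant in $B$, so Pratt's lemma applies. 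I expect this passage to the limit to be the main obstacle: since $Y$ is only assumed square integrable, the integrands $\abs*{Y-\mathcal{H}_{B,p}(X)}^2$ admit \emph{no} fixed integrable dominating function and the ordinary dominated convergence theorem fails. It is exactly the inherited bound $\mathbb{E}[\mathcal{T}_{p,\Theta}(X)^2]\leq\mathbb{E}[Y^2]$ from Theorem \ref{theorem4.2} that rescues the argument, whether one uses the explicit second–moment expansion or the moving dominating sequence in Pratt's lemma.
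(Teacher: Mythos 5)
Your proof is correct, and it is worth noting that the paper itself gives no proof at all here --- it simply defers to Theorem 11.1 of \citep{Breiman} --- so your argument is a genuine, self-contained addition rather than a variant of something in the text. Your first route (expand the square, kill the cross term by exchangeability and Fubini, and compute $\mathbb{E}[\mathcal{H}_{B,p}(X)^2]=\frac1B\mathbb{E}[\mathcal{T}_{p,\Theta}(X)^2]+\frac{B-1}{B}\mathbb{E}[\mathcal{H}_p(X)^2]$ from the conditional independence of the $\Theta_k$ given $X$) is cleaner than Breiman's and buys something he does not state: an explicit $O(1/B)$ rate, the error being exactly $\frac1B\bigl(\mathbb{E}[\mathcal{T}_{p,\Theta}(X)^2]-\mathbb{E}[\mathcal{H}_p(X)^2]\bigr)\geq 0$. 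Your second route is essentially Breiman's own (SLLN plus a passage to the limit under $\mathbb{E}_{X,Y}$), and you correctly identify that ordinary dominated convergence is unavailable and that Pratt's lemma with the moving dominating sequence $g_B$ closes the gap; the ingredient that makes both routes work, $\mathbb{E}[\mathcal{T}_{p,\Theta}(X)^2]\leq\mathbb{E}[Y^2]$, is indeed legitimately inherited from the leafwise Jensen bound inside the proof of Theorem \ref{theorem4.2}. The one point you should make explicit is which statement is being proved: the theorem as written is ambiguous as to whether $\mathbb{E}$ averages over the forest randomness $(\Theta_k)_k$ or whether, as in Breiman's Theorem 11.1, the convergence is asserted for almost every realisation of the sequence $(\Theta_k)_k$ with $\mathbb{E}=\mathbb{E}_{X,Y}$. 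Your first argument proves the averaged version only; your second argument targets the almost-sure version, but there the Fubini upgrade (``for a.e.\ $\Theta$-sequence, for a.e.\ $x$'') and the fact that $\mathbb{E}_{X}[g_B]$ is then only asymptotically, not identically, equal to $\mathbb{E}_X[g]$ (it is $2\mathbb{E}[Y^2]+\frac2B\sum_k\mathbb{E}_X[\mathcal{T}_{p,\Theta_k}(X)^2]$, which converges by a second application of the SLLN) deserve a sentence each rather than being folded into ``Fubini'' and ``stays constant in $B$''.
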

\begin{theorem}
\label{Theorem5.3}
\begin{equation*}
    \mathbb{E}\left[ \abs*{Y - \mathcal{H}_p(X)}^2\right] \leq \bar{\rho} \mathbb{E}_{\Theta}\left[ \mathbb{E}\left[ \abs*{Y - \hat \cT^M_{p,\Theta}(X)}^2\right]\right]
\end{equation*}
where $\bar{\rho}$ is the weighted correlation between the residuals $Y - \hat \cT^M_{p,\Theta}(X)$ and $Y - \hat \cT^M_{p,\Theta^{\prime}}(X)$ and $\Theta$ and $\Theta^{\prime}$ are independent. See Theorem 11.2 in \citep{Breiman}
\end{theorem}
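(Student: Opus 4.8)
The plan is to reproduce the argument of \citep[Theorem 11.2]{Breiman}, which rests on rewriting the squared error of the averaged predictor through an independent copy of the randomising parameter. Write the single-tree residual as $r_\Theta := Y - \mathcal{T}_{p,\Theta}(X)$. Since $\mathcal{H}_p(X) = \mathbb{E}_\Theta[\mathcal{T}_{p,\Theta}(X)]$ by definition, linearity gives the forest residual $Y - \mathcal{H}_p(X) = \mathbb{E}_\Theta[r_\Theta]$. Following Breiman I would work under the normalisation that each tree is unbiased, $\mathbb{E}[r_\Theta] = 0$ for every $\Theta$, so that $\mathbb{E}[r_\Theta^2]$ is the actual variance of the residual and the quantities $\rho$ introduced below are genuine correlations.

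First I would introduce an independent copy $\Theta'$ of $\Theta$ and use the elementary identity $\left(\mathbb{E}_\Theta[r_\Theta]\right)^2 = \mathbb{E}_{\Theta,\Theta'}[r_\Theta\, r_{\Theta'}]$. Taking the expectation over $(X,Y)$ and exchanging it with $\mathbb{E}_{\Theta,\Theta'}$ — legitimate by Fubini's theorem, since $Y \in \mathbb{L}^2$ and each tree obeys $\mathbb{E}[\mathcal{T}_{p,\Theta}(X)^2] \leq \mathbb{E}[Y^2]$ exactly as in the proof of Theorem \ref{theorem4.2}, so that $\mathbb{E}_{X,Y}[\abs*{r_\Theta r_{\Theta'}}]$ stays uniformly bounded in $(\Theta,\Theta')$ — yields
\begin{equation*}
    \mathbb{E}\left[\abs*{Y - \mathcal{H}_p(X)}^2\right] = \mathbb{E}_{\Theta,\Theta'}\left[\mathbb{E}\left[r_\Theta \, r_{\Theta'}\right]\right].
\end{equation*}
Next I would bring in the correlation structure by setting $sd(\Theta) := \sqrt{\mathbb{E}[r_\Theta^2]}$ and
\begin{equation*}
    \rho(\Theta, \Theta') := \frac{\mathbb{E}[r_\Theta\, r_{\Theta'}]}{sd(\Theta)\, sd(\Theta')},
\end{equation*}
so that $\mathbb{E}[r_\Theta\, r_{\Theta'}] = \rho(\Theta,\Theta')\, sd(\Theta)\, sd(\Theta')$. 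Defining the weighted correlation
\begin{equation*}
    \bar\rho := \frac{\mathbb{E}_{\Theta,\Theta'}\left[\rho(\Theta,\Theta')\, sd(\Theta)\, sd(\Theta')\right]}{\left(\mathbb{E}_\Theta[sd(\Theta)]\right)^2},
\end{equation*}
and noting that the denominator equals $\mathbb{E}_{\Theta,\Theta'}[sd(\Theta)\,sd(\Theta')]$ by independence (so $\bar\rho$ really is a weighted average of the pairwise $\rho(\Theta,\Theta')$), the previous display reads $\mathbb{E}[\abs*{Y - \mathcal{H}_p(X)}^2] = \bar\rho \left(\mathbb{E}_\Theta[sd(\Theta)]\right)^2$.

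Finally I would close the estimate with Jensen's inequality applied to the convex map $x \mapsto x^2$, namely $\left(\mathbb{E}_\Theta[sd(\Theta)]\right)^2 \leq \mathbb{E}_\Theta[sd(\Theta)^2] = \mathbb{E}_\Theta\left[\mathbb{E}[\abs*{Y - \mathcal{T}_{p,\Theta}(X)}^2]\right]$, which is exactly the right-hand side of the claim. Since the left-hand side is nonnegative and $\left(\mathbb{E}_\Theta[sd(\Theta)]\right)^2 > 0$ in the nondegenerate case, one must have $\bar\rho \geq 0$, and multiplying the Jensen bound by $\bar\rho$ gives the theorem. The main obstacle is not any single computation but pinning down the definitions cleanly: justifying the interchange of expectations and, above all, making explicit the normalisation $\mathbb{E}[r_\Theta] = 0$ under which $\bar\rho$ is a genuine weighted correlation; without it one simply reads $sd(\Theta)$ as the $\mathbb{L}^2$-norm and $\rho$ as the uncentered $\mathbb{L}^2$-cosine, and the same chain of equalities and the Jensen step deliver the inequality unchanged.
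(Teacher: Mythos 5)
Your argument is correct and is precisely the proof of Theorem 11.2 in Breiman's paper, which is all the paper itself offers for this statement (it gives no proof beyond the citation). Your added care about the interchange of expectations and about whether $\bar\rho$ is a centred correlation or an uncentered $\mathbb{L}^2$-cosine actually tightens a point Breiman leaves implicit, so nothing is missing.
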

Theorem \ref{Theorem5.3} says that to have a good generalization error in the random forest, one should have small generalization errors in the basis trees, and the basis trees should not be highly correlated.
\section{Numerical results}
\subsection{Description}
This section studies the price of some Bermudan options using regression trees or random forests to approximate the conditional expectations. We compare the results to some reference prices and those given by the standard Longstaff Schwarz method with regression on polynomial functions, which is a basic LSM algorithms. More sophisticated representations can be used to mildly temper the curse of dimensionality. We use the Scikit-Learn library in Python, \citep{scikit-learn}. For regression trees, this library offers two methods of splitting: ``best'' to choose the best split, meaning that the split threshold is the one that minimizes the MSE and the direction for splitting is the one that gives the lowest MSE among all directions; ``random'' to choose the best random split, meaning that the split threshold is the one that minimizes the MSE and the direction for splitting is chosen randomly. For the following tests, we will use the latter method, which is just slightly different from what we presented in Section \ref{section2} in the way that no mid-point cuts will be considered. We also use the feature \texttt{min\_samples\_leaf} which allows us to set a minimum number of samples in each node. This will allow us to avoid over-fitting. For random forests, we will use the bootstrapping method (\texttt{Bootstrap=True}), meaning that for each tree in the forest, we will use a sub-sample drawn randomly and with replacement from the training data. We will also use the feature \texttt{max\_samples} which allows having a specific number of data points or a percentage of the training data attributed to each tree. Having the trees trained on different data as much as possible allows us to have a low correlation between the trees which, using Theorem \ref{Theorem5.3}, should make the random forest more robust.\\
Following the work of \citep{LongstaffSchwarz}, we only use the in-the-money paths to learn the continuations values, which significantly improves the numerical computations. All the prices that we show are obtained after resimulation, meaning that the paths used in the estimation of the conditional expectations are not the same ones used by the Monte Carlo which means that the prices we show are unbiased. 

For small dimensional problems, our algorithm takes approximately the same computation time as the LSM approach for a comparable accuracy. To measure the computational time of our algorithm, we take into account both the training and running times. However, note that the training step over weights the prediction step by a great deal. Once the trees are trained, evaluating them is straightforward, which makes the resimulation price almost free. 
The computational time of the random forest approach linearly depends on the number of regression trees in the forest.

\subsection{Black and Scholes}
Consider the Black and Scholes model
\begin{equation*}
    \begin{cases}
        dS_t^i &= rS_t^i dt + \sigma_i S_t^idB_t^i,\\
        d<B^i,B^j>_t &= \rho_{ij}dt.
    \end{cases}
\end{equation*}
where $\sigma_i$ is the volatility of the underlying $S^i$, assumed to be deterministic, $r$ is the interest rate, assumed constant, and $\rho_{ij}$, represents the correlation between the underlyings $S^i$ and $S^j$, assumed constant.  
\subsubsection{One-dimensional put option}
We consider the Bermudan put option with payoff $(K -S_{\tau})^+$ with maturity $T= 1$ year, $K=110$, $S_0= 100$, $\sigma=0.25$, exercisable at $N=10$ different dates. We consider $r=0.1$. We have a reference price for this option of 11.987 computed by a convolution method in \citep{lord}. The LSM algorithm converges to the correct price with only a polynomial of degree 3. Figure \ref{figure1}, shows the price of the option when we use regression trees with a random split strategy (continuous line) or a best split strategy (dotted line) to estimate the conditional expectations. With the random strategy, the best price we get is 11.89. The case \texttt{min\_samples\_leaf}=1 and \texttt{max\_depth}=20 gives a price of 10.5, which is far from the reference price. This result is due to over-fitting. In fact, for this case, the number of degrees of freedom is too big. The tree fits the training data too well, but it cannot generalize when confronted with new data. For the best split strategy, we obtain a slightly better price of 11.94. However, depending on the tree parameters, the price fluctuates, and we can see that the best split strategy is not necessarily better than the random split strategy. Thus, for the following, we will keep using the random split strategy. Random forests with basis trees of maximum depth 5 and minimum 100 samples in each leaf converge to the correct price with only ten trees. 
\begin{figure}[H]
    \centering
    \includegraphics{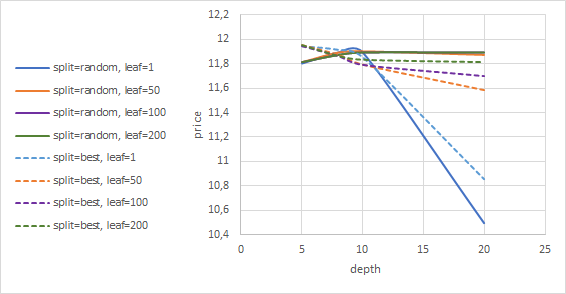}
    \caption{one dimensional put with regression trees, true price=11.987}
    \label{figure1}
\end{figure}
\subsubsection{Call option on the maximum of two assets}
We consider a call option on the maximum of 2 assets with payoff  $(\max(S_{\tau}^1, S_{\tau}^2) - K)^+$, we use the same set of parameters as in \citep{glasserman}, for which we have reference prices of 13.90, 8.08 and 21.34 for $S_0^i=100, 90$ and $110$ respectively. The LSM algorithm using a polynomial of degree 5 converges to a price of 13.90, 8.06, 21.34 for the cases $K=100, 90, 110$ respectively. This is a small dimensional problem, so the convergence of the LSM is expected. With regressions trees we have slightly less satisfying results as shown in Figure \ref{figure5}. We can still see the case of over-fitting when giving the regression trees too many degrees of freedom. Aggregating the regression trees into random forests immediately improves the results as shown in Figure \ref{figure6}. Note that the lower the percentage of data in each basis tree, the better the results. This confirms the results of Theorem \ref{Theorem5.3} .
\begin{figure}[H]
    \centering
    \includegraphics[scale=0.7]{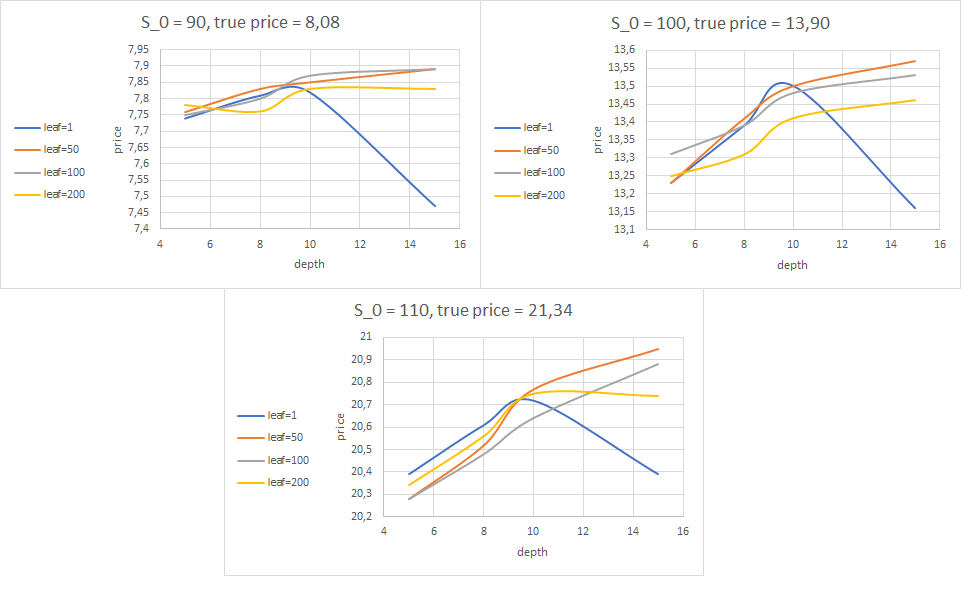}
    \caption{Call on the maximum of two assets with regression trees, $K=100, T=3$ years, $\sigma^i =0.2, r=0.05, \rho_{ij}=0, \delta_i=0.1, N=9, M=100,000$}
    \label{figure5}
\end{figure}

\begin{figure}[H]
    \centering
    \includegraphics[scale=0.7]{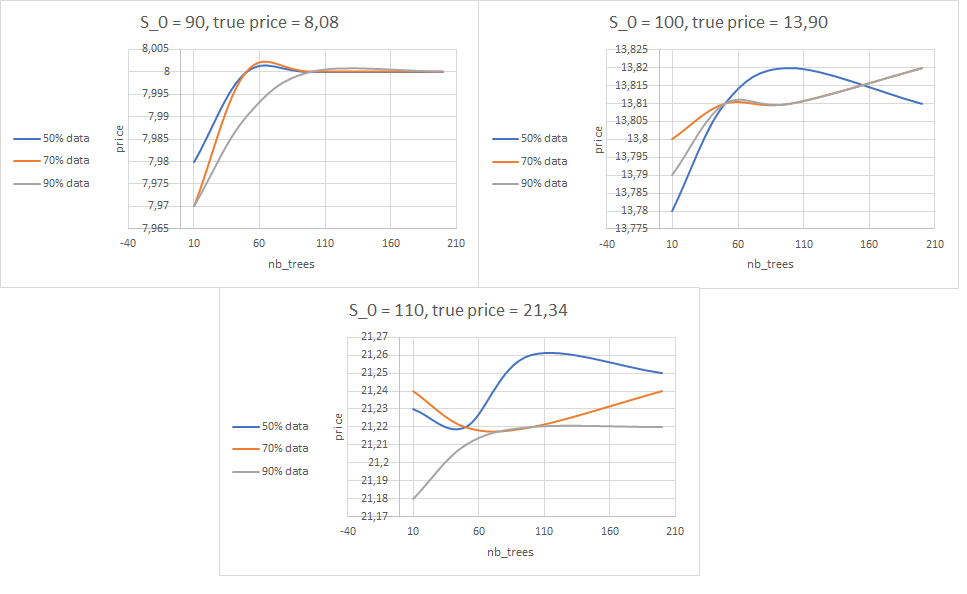}
    \caption{Call on the maximum of two assets with random forests, $K=100, T=3$ years, $\sigma^i =0.2, r=0.05, \rho_{ij}=0, \delta_i=0.1, N=9, M=100,000$}
    \label{figure6}
\end{figure}

\subsubsection{Geometric basket option}
We consider a Bermudan Put option on a geometric basket of $d$ underlying with payoff $$\left(K - \left(\prod\limits_{i=1}^d S_{\tau}^i\right)^{\frac{1}{d}}\right)^+.$$ We test the following option for $d=2,10,40$ for which we have reference prices from \citep{cox} using the CRR tree method. With the LSM algorithm, we converge to the correct price 4.57 for the case $d=2$, using only a polynomial of degree 3. For the case $d=10$, we can at most use a polynomial of degree 3 due to the curse of dimensionality. With this parametrization, we obtain a price of 2.90 for a true price of 2.92. For the case $d=40$, we cannot  go further than a polynomial of degree 1, which yields a price of 2.48 for a reference price of 2.52. Figure \ref{figure2} shows the results obtained with regression trees. For the case $d=2$, the best price we get is 4.47 and, as expected, the LSM algorithm has a better performance. This is also the case for the cases $d=10$ and $d=40$ where the best prices we obtain are 2.84 and 2.46 respectively. Notice that even though these are high dimensional cases, the trees converge with only a depth of 5 or 8. For $d=10$, our algorithm with \texttt{depth} equal $8$ and \texttt{min\_samples\_leaf} equal $200$ runs in $3$ seconds compared to the $11$ seconds needed by the LSM with polynomial degree $3$. The running times of the two approaches are pretty comparable. For $d=40$, our algorithm takes the same computational time as the LSM with degree $1$ polynomials. We also notice the importance of the parameter \texttt{min\_samples\_leaf}. In fact, letting the trees grow without managing this parameter (case leaf1) leads to a problem of over-fitting. The results get better when we use random forests as shown in Figure \ref{figure3}. For these random forests, we used basis trees of \texttt{max\_depth=8} and \texttt{min\_samples\_leaf=100}. Notice for the case $d=2$, the curve where only 50\% of the data is used gives much better results as in this case the basis trees are the less correlated. For the cases $d=10$ and $d=40$, the best choice is not necessarily to use 50\% of the data in each tree. As these are larger dimensions, having the trees trained on a small percentage of the training data maybe not enough. One may consider extending the size of the training data itself. Furthermore, we notice that once the percentage of data to use in each tree is chosen, the price of the option converges as the number of trees in the forest grows. However, note the computational time of the random forest method linearly depends on the number of trees inside the forest. Although the regression tree approach runs within roughly the same computational time as the LSM for a similar accuracy, the random forests approach may take much longer as the number of trees increases.
\begin{figure}[H]
    \centering
    \includegraphics[scale=0.7]{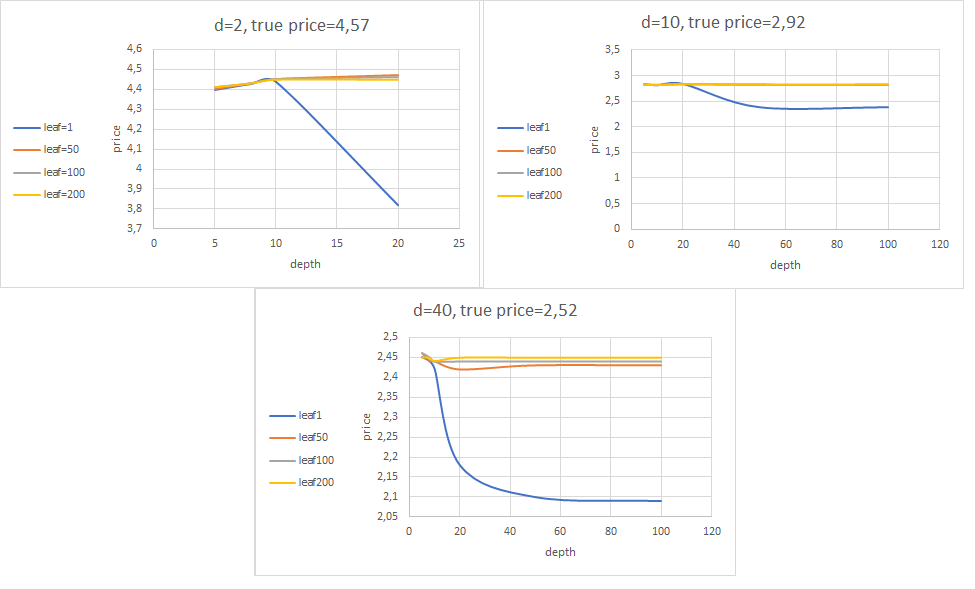}
    \caption{Geometric put option with regression trees}
    \label{figure2}
\end{figure}
\begin{figure}[H]
    \centering
    \includegraphics[scale=0.7]{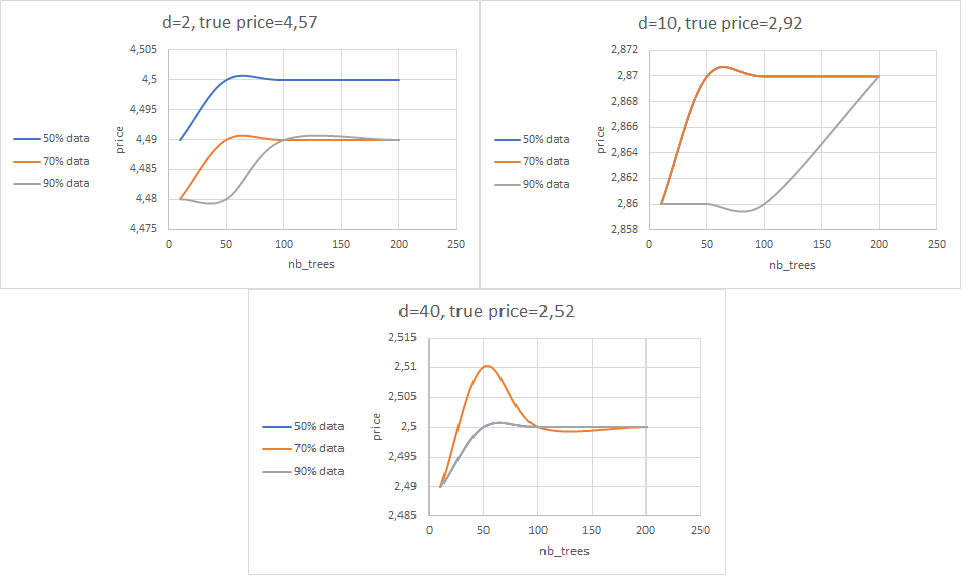}
    \caption{Geometric put option with random forests}
    \label{figure3}
\end{figure}

\subsubsection{A put basket option}
We consider a put option on the basket of $d=40$ asset with payoff $\left( K - \sum_{i=1}^d \omega_i S_T^i\right)^+$. We test this payoff for $d=40$ for which we have a reference price from \citep{goudenege} between 2.15 and 2.22 using the following set of parameters: $T=1, S_i=100, K=100, r=0.05, \sigma_i=0.2, \rho_{ij}=0.2, \omega_i = \frac{1}{d}$ and $N=10$. With a polynomial of degree 1, we obtain a price of 2.15 using the LSM algorithm. The results obtained with regression trees are shown in Figure \ref{figure7}.

\begin{figure}[H]
    \centering
    \includegraphics{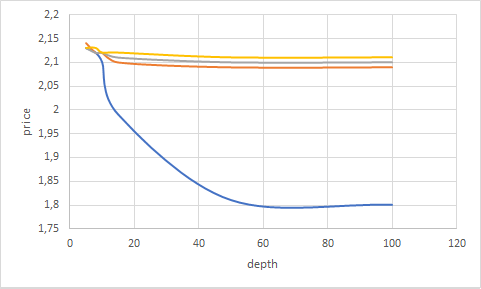}
    \caption{Put on a basket of 40 asset with regression trees}
    \label{figure7}
\end{figure}
Even though this example is high dimensional, we do not need a lot of parameters to estimate the conditional expectations (the trees converge for very small depths). This will not be the case for the next example which is very non linear. The aggregation into random forests leads to a price of 2.16 using only 50 trees.   

\subsubsection{A call on the max of 50 asset}
We consider a call option on the maximum of $d=50$ asset with payoff $\left(\max\limits_{1 \leq i \leq d}(S_T^i) - K \right)^+$ with the following characteristics: $K=100, T = 3$ years, $ S_0^i = 100, \sigma_i = 0.2, \delta_i = 0.1, \rho_{ij} = 0 \;\forall i,j, r=0.05, N=9, M=100 000$. \citep{becker} report $[69.56, 69.95]$ as the 95\% confidence interval for the option price. With the LSM algorithm we find a price of 67.88 with a polynomial of degree 1. This is a difficult example for which bigger trees are needed to approach the conditional expectations. At maturity, the payoff depends only on one direction (corresponding to the best performance), if the cuts in the tree never consider that direction, the estimation will not be correct. As a result, we consider a number of cuts big enough to ensure that each direction is taken into consideration. We allow the depth to grow while monitoring the  \texttt{min\_samples\_leaf} in order to have a significant number of samples in each leaf. Table \ref{table1} shows the results obtained with regression trees. As the best price we obtain is given by depth=100 and \texttt{min\_samples\_leaf} = 100, we use this set of parameters for the random forest part. Table \ref{table2} shows the results that we obtain with this method. 

\begin{table}[H]
    \centering
    \begin{tabular}{c|c|c}
         depth &  \texttt{min\_samples\_leaf} & price \\
         \hline
         50 & 50 & 66,89\\
         50 & 100 & 66.88\\
         100 & 50 & 67.13\\
         100 & 100 & 67.31\\
         200 & 50 & 67.16\\
         200 & 100 & 67.28\\
    \end{tabular}
    \caption{A call option on the maximum of 50 asset with regression trees}
    \label{table1}
\end{table}

\begin{table}[H]
    \centering
    \begin{tabular}{c|c|c}
        \texttt{nb\_trees} & \texttt{max\_samples} & price\\
         \hline
         10 & 50\% & 68,32\\
         10 & 70\% & 68,32\\
         10 & 90\% & 68,29 
    \end{tabular}
    \caption{A call option on the maximum of 50 asset with random forests}
    \label{table2}
\end{table}
Using only regression trees is not enough to have acceptable results. However, as soon as we aggregate the regressor into random forests, we obtain very satisfying results and with just 10 trees we converge to a good price. We can also notice in this example that using uncorrelated trees leads to better results (see the case \texttt{max\_samples}= 50\% or 70\% against the case \texttt{max\_samples} = 90\%). 

\subsection{A put in the Heston model}
We consider the Heston model defined by
\begin{align*}
    dS_t &= S_t(r_tdt + \sqrt{\sigma_t}(\rho dW_t^1 + \sqrt{1 - \rho^2}dW_t^2))\\
    d\sigma_t &= \kappa (\theta - \sigma_t)dt + \xi \sqrt{\sigma_t}dW_t^1
\end{align*}
and we consider a put option with payoff $\left(K - S_T\right)^+$. we have no reference price for this option, so we will just compare the results of regression trees and random forests to the LSM method. We use the following set of parameters: $K=100, S_0=100, T = 1, \sigma_0=0.01, \xi=0.2, \kappa=2, \rho=-0.3, r=0.1, N=10$ and $M=100,000$. The LSM method yields a price of 1.70. Figures \ref{figure8} and \ref{figure9} show the results obtained with regression trees and random forests. Both methods converge to the same price of LSM. We notice for this example the occurrence of the over-fitting phenomenon for regression trees with \texttt{max\_depth=15} and \texttt{min\_sample\_leaf=1}. We also have the same behavior for random forests in function of the percentage of data given to each basis tree. 
\begin{figure}[H]
    \centering
    \includegraphics[scale=0.7]{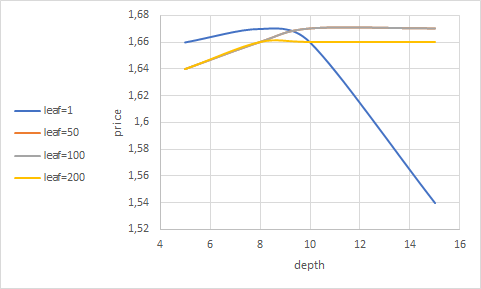}
    \caption{A put option in the Heston model with regression trees}
    \label{figure8}
\end{figure}
\begin{figure}[H]
    \centering
    \includegraphics[scale=0.7]{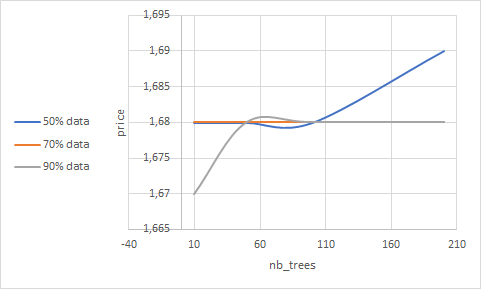}
    \caption{A put option in the Heston model with random forests}
    \label{figure9}
\end{figure}
\section{Conclusion}
Pricing Bermudan options comes down to solving a dynamic programming equation where the main trouble comes from the computation of the conditional expectations representing the conditional expectations. We have explored the usage of regression trees and random forests for the computations of these quantities. We have proved in two steps the convergence of the algorithm when regression trees are used: first, the convergence of the conditional expectations; Then, the convergence of the Monte Carlo approximation. This problem was particularly hard to solve given that the regression trees do not solve a global optimization problem as does the functional regression used in the LSM algorithm. We have shown through numerical experiments that we obtain good prices for some classical examples using regression trees. The aggregation of regression trees into random forests yields even better results. We came to the conclusion that for small dimensional problems, a simpler algorithm like the LSM is efficient enough. However, for high dimensional problems, it is interesting to consider using random forests. Instead of using all the features of the problem, the basis trees in the forest only use a subset of the features which can help combat the problem of the curse of dimensionality. 

\bibliographystyle{plainnat} 
\bibliography{references}

\end{document}